%% file: main-arxiv.tex
\documentclass{article} % For LaTeX2e
\usepackage{iclr2026_conference_arxiv,times}

\usepackage{multirow}
\usepackage{amsmath}
\usepackage{amssymb}
\usepackage{amsthm}
\usepackage{graphicx}
\usepackage{booktabs} % for professional tables
\usepackage{rotating}
\usepackage{hyperref}
\usepackage{url}
\usepackage{subcaption}
\usepackage{algorithm}
\usepackage{algorithmic}
\newtheorem{theorem} {Theorem}

\newtheorem{corollary}[theorem] {Corollary}

\input{preamble.tex}
\newcommand{\newalg}{\textsf{\small MultiRepMFC}}
\newcommand{\mfca}{\textsf{\small MFC-Approx}}
\newcommand{\dyn}{\textsf{\small DP}-\newalg{}}
\newcommand{\greed}{\textsf{\small Greedy}-\newalg{}}
\newcommand{\fix}[1]{\textsf{\small Fixed$(#1)$}-\newalg{}}
\newcommand{\mfcopt}{\textsf{\small MFC-OPT}}

\title{Better Learning-Augmented Spanning Tree Algorithms via Metric Forest Completion}

\author{Nate Veldt\textsuperscript{1}, Thomas Stanley\textsuperscript{1},
Benjamin W. Priest\textsuperscript{2}, Trevor Steil\textsuperscript{2}, Keita Iwabuchi\textsuperscript{2}, \\
\textbf{T.S.~Jayram}\textsuperscript{2}, 
\textbf{Grace J. Li}\textsuperscript{2}, \textbf{Geoffrey Sanders}\textsuperscript{2}\\
\textsuperscript{1}Texas A\&M University,
\textsuperscript{2}Lawrence Livermore National Laboratory\\
\texttt{\{nveldt,thomas.stanley\}@tamu.edu}\\
\texttt{\{priest2,steil1,iwabuchi1,thathachar1,li85,sanders29\}@llnl.gov}
}

\iclrfinalcopy % Uncomment for camera-ready version, but NOT for submission.
\begin{document}

\maketitle

\begin{abstract}
We present improved learning-augmented algorithms for finding an approximate minimum spanning tree (MST) for points in an arbitrary metric space. Our work follows a recent framework called metric forest completion (MFC), where the learned input is a forest that must be given additional edges to form a full spanning tree. \cite{veldt2025approximate} showed that optimally completing the forest takes $\Omega(n^2)$ time, but designed a 2.62-approximation for MFC with subquadratic complexity. The same method is a $(2\gamma + 1)$-approximation for the original MST problem, where $\gamma \geq 1$ is a quality parameter for the initial forest. 
We introduce a generalized method that interpolates between this prior algorithm and an optimal $\Omega(n^2)$-time MFC algorithm. Our approach considers only edges incident to a growing number of strategically chosen ``representative'' points. One corollary of our analysis is to improve the approximation factor of the previous algorithm from 2.62 for MFC and $(2\gamma+1)$ for metric MST to 2 and $2\gamma$ respectively. 
We prove this is tight for worst-case instances, but we still obtain better instance-specific approximations using our generalized method.
We complement our theoretical results with a thorough experimental evaluation. 
\end{abstract}

\input{arxiv/intro}
\input{arxiv/prelims}

\input{arxiv/theory}
\input{arxiv/experiments-v1}

\section{Conclusions and Discussion}
Metric forest completion is a learning-augmented framework for finding a spanning tree in an arbitrary metric space, when the learned input is an initial forest that serves as a starting point. We have introduced a generalized approximation algorithm for this problem that comes with better theoretical approximation guarantees, which we prove are tight. Our results also include very good instance-specific approximation guarantees that overcome worst-case bounds.
In numerical experiments, we show that with a small amount of extra work, we can obtain much better quality solutions for MFC than the prior approach. 
One open direction is to pursue approximations for metric MST in terms of other quality parameters (aside from $\gamma$-overlap) for the initial forest. Another question is whether one can achieve worst-case approximation factors below 2 for MFC, using alternative techniques with subquadratic complexity. Finally, an interesting question is whether we can prove general lower bounds on the approximation ratio that hold for all algorithms with subquadratic complexity.

\section*{Acknowledgements}
This work was performed under the auspices of the U.S. Department of Energy by Lawrence Livermore National Laboratory under Contract DE-AC52-07NA27344 (LLNL-CONF-2011636), and was supported by LLNL LDRD project 24-ERD-024.

\section*{Reproducibility statement}
Code is available at~\url{https://github.com/tommy1019/MetricForestCompletion}. This includes all source code for our algorithms, the commands that were run to produce the main results, and scripts for plotting the results. Most of the datasets are too large to store in the repository, so we have included instructions in the README regarding where the original datasets can be
obtained and how they were preprocessed. The appendix of our paper also includes a description for each dataset and references to the original sources. For our theoretical results, complete proof details are included either in the main text or the appendix.

\bibliography{refs}
\bibliographystyle{iclr2026_conference}
\clearpage
\appendix

\input{arxiv/appendix-dp}
\input{arxiv/appendix-runtime}

\input{arxiv/appendix-experiments}

\end{document}

%% file: preamble.tex
\DeclareMathOperator*{\minimize}{minimize}

\newcommand{\vb}{\textbf{b}}

\usepackage{mathtools}
\DeclarePairedDelimiter\ceil{\lceil}{\rceil}
\DeclarePairedDelimiter\floor{\lfloor}{\rfloor}

%% file: arxiv/intro.tex
\section{Introduction}
Finding a minimum spanning tree (MST) of a graph is a fundamental computational primitive with applications to hierarchical clustering~\citep{gower1969minimum,gagolewski2025clustering,la2022ocmst}, network design~\citep{loberman1957formal}, feature selection~\citep{labbe2023dendrograms}, and even comparison of brain networks~\citep{stam2014trees}. The \emph{metric} MST problem is a special case where the input is a set of $n$ points, and edge weights are defined by distances between points. A conceptually simple algorithm for this case is to compute all $O(n^2)$ distances explicitly and then apply a classical greedy algorithm.
For Euclidean metrics, there exist more sophisticated algorithms that can find an optimal or at least approximate minimum spanning tree in $o(n^2)$ time~\citep{agarwal1990euclidean,shamos1975closest,vaidya1988minimum,arya2016fast}. For general metric spaces, however, one must know $\Omega(n^2)$ edges to compute even an approximate solution~\citep{indyk1999sublinear}. This fact constitutes a fundamental challenge for designing algorithms that scale to massive modern datasets, apply to general distance functions, and come with provable guarantees. 

Motivated by the above challenge, our previous work \citep{veldt2025approximate} recently addressed the metric MST problem from the perspective of \textit{learning-augmented} algorithms~\citep{mitzenmacher2022algorithms}. The learning-augmented model assumes access to a prediction for some problem, often produced by a machine learning heuristic, that comes with no theoretical guarantees but may still be useful in practice. The goal is to design an algorithm that can obtain better than worst-case guarantees using the prediction. The performance of the algorithm is typically captured by some parameter measuring the error of the prediction. 
The prediction, performance measure, and error parameter vary depending on the context. Some prior work focuses on better-than-worst-case runtimes or query complexities, including for binary search~\citep{mitzenmacher2022algorithms,dinitz2024binary}, maximum flow~\citep{polak2024learning,davies2024warm,davies2023predictive}, and incremental approximate shortest paths~\citep{mccauley2025incremental}. Other works focus on improving competitive ratios for online algorithms, including for ski rental~\citep{mitzenmacher2022algorithms,shin2023improved}, scheduling~\citep{benomar2024non}, and online knapsack problems~\citep{lechowicz2024time}. In other settings, the goal is to improve approximation ratios for hard combinatorial problems, e.g., clustering problems~\citep{braverman2025learning,ergun2022learning,nguyen2023improved,huang2025new} or maximum independent set~\citep{braverman2024independentset}.

For the metric MST problem, our recent work~\citep{veldt2025approximate} introduced a learning-augmented setting where the $n$ points are partitioned into components and each component is associated with a tree on its points. This input is called the initial forest, and can be viewed as a prediction for the forest that would be obtained by running several iterations of a classical algorithm such as Kruskal's. \emph{Metric forest completion} (MFC) is then the task of finding a minimum-weight spanning tree that contains the initial forest as a subgraph. The quality of an initial forest is captured by a parameter $\gamma \geq 1$, where $\gamma = 1$ if the initial forest is contained in some optimal MST. We previously proved that optimally solving MFC takes $\Omega(n^2)$ time, but gave a $2.62$-approximation algorithm whose runtime depends on the number of components $t$, and has subquadratic complexity if $t = o(n)$~\citep{veldt2025approximate}. The same method is a learning-augmented algorithm for metric MST with an approximation factor of $(2\gamma + 1)$. The algorithm identifies a single \textit{representative} node for each component in the initial forest, and only considers edges incident to one or two representatives. Implementations of the algorithm produced nearly optimal spanning trees while being orders of magnitude faster than the naive $\Omega(n^2)$ algorithm for metric MST. This is true even after factoring in the time to compute an initial forest.
The in-practice approximation ratios also far exceeded the theoretical bounds of $(2\gamma + 1)$ (for the original MST problem) and $2.62$ (for the MFC step) on all instances. 

\textbf{Our contributions: generalized algorithm and tighter bounds.} 
While our prior work already demonstrates the theoretical and practical benefits of the MFC framework, several open questions remain. Is the large gap between theoretical bounds and in-practice approximation ratios due mainly to the specific datasets considered? Are there pathological examples where the previous approximation guarantees are tight? In the other direction, can we tighten the analysis to improve the worst-case approximation guarantees? Also, can we prove better instance-specific approximations?

We introduce and analyze a generalized approximation algorithm for MFC that provides a way to address all of these questions. This algorithm starts with a budget for the number of points in the dataset that can be labeled as representatives. It then finds the best way to complete the initial forest by only adding edges incident to one or two representatives. Choosing one representative per component corresponds to applying our prior approximation algorithm~\citep{veldt2025approximate}. Letting all points be representatives leads to an optimal (but $\Omega(n^2)$-time) algorithm. Our new approach interpolates between these extremes, and for reasonable-sized budgets provides a way to significantly improve on the prior algorithm with only minor increase in runtime. We derive new instance-specific bounds on the approximation factor for this generalized approach, given in terms of an easy-to-compute {cost} function associated with a set of representatives.
As an important corollary of our theoretical results, we prove that when there is only one arbitrary representative per component, the algorithm is a 2-approximation for MFC and a $2\gamma$-approximation for metric MST. This immediately improves on the approximation factors of $2.62$ and $(2\gamma + 1)$. Furthermore, our analysis is both simpler and more general. We also prove by construction that these guarantees are tight in the worst case.

As a technical contribution of independent interest, we show that choosing the best set of representatives for our algorithm amounts to a new generalization of the $k$-center clustering problem. For this generalization, we have multiple instances of points to cluster, but the budget $k$ on the number of cluster centers is shared across instances. We design a 2-approximation for this shared-budget multi-instance $k$-center problem by combining a classical algorithm for $k$-center~\citep{gonzalez1985clustering} with a dynamic programming approach for allocating the shared budget across different instances. 

As a final contribution, we test an implementation of our new algorithm on a range of real-world datasets with varying distance metrics. We find that increasing the number of representatives even slightly leads to significant improvements in spanning tree quality with only a small increase in runtime, and that our dynamic programming approach performs especially well. Furthermore, our instance-specific approximation guarantees are easy to compute and serve as a very good proxy for the true approximation factor, which is impractical to compute exactly.

%% file: arxiv/prelims.tex
\section{Preliminaries and Related Work}
For $m \in \mathbb{N}$, let $[m] = \{1,2, \hdots, m\}$. For an undirected graph $G = (V,E)$ and edge weight function $w \colon E \rightarrow \mathbb{R}$, a minimum spanning tree (MST) for $G$ with respect to $w$ is a tree $T = (V,E_T)$ where $E_T \subseteq E$ and the total weight of edges $w(E_T) = \sum_{e \in E_T} w(e)$ is minimized. Optimal greedy algorithms for this problem have been known for nearly a century~\citep{boruvka1926jistem,kruskal1956shortest,prim1957shortest}. For example, Kruskal's algorithm starts with all nodes in singleton components, and at each step adds a minimum weight edge that connects two disjoint components. Bor{\r{u}}vka's algorithm is similar, but adds the minimum weight edge adjacent to \emph{each} component every round. 

\textbf{The metric MST problem.}
Let $(\mathcal{X},d)$ be a finite metric space defined by a set of points $\mathcal{X} = \{x_1, x_2, \hdots, x_n\}$ and a distance function $d \colon \mathcal{X} \times \mathcal{X} \rightarrow \mathbb{R}^+$. 
%In order to be a pseudometric space, $d$ is symmetric, satisfies the triangle inequality, and $d(x,x) = 0$ for every $x \in \mathcal{X}$. Beyond this, we make no assumptions about the structure of the space or the distance function. 
This input implicitly defines a complete graph $G_\mathcal{X} = (\mathcal{X}, E_\mathcal{X})$ with an edge function $w_\mathcal{X}$ that is equivalent to the distance function $d$. We let $(u,v)$ denote the edge in $G_\mathcal{X}$ defined by points $(x_u, x_v)$, with weight $w_\mathcal{X}(u,v) = d(x_u,x_v)$. For two sets $X,Y \subseteq \mathcal{X}$, define $d(X,Y) = \min_{x \in X, y \in Y} d(x,y).$
We extend $w_\mathcal{X}$ to a weight function on an edge set $F \subseteq E_\mathcal{X}$ by defining $w_\mathcal{X}(F) = \sum_{(u,v) \in F} w_\mathcal{X}(u,v)$. The metric MST problem is the task of finding a minimum spanning tree of $G_\mathcal{X}$ with respect to $w_\mathcal{X}$.

A conceptually simple approach for metric MST is to query all $O(n^2)$ distances and apply a classical algorithm to the resulting complete graph. Another approach 
that still takes $\Omega(n^2)$ time for general metric spaces but 
avoids querying all distances is an \emph{implicit} implementation of a classical method, which only queries distances as needed~\citep{agarwal1990euclidean,callahan1993faster}. In more detail, an implicit implementation of Kruskal's or Bor{\r{u}}vka's algorithm starts with all $n$ points as singleton components. At every step of the algorithm, for each pair of components $A$ and $B$, the algorithm finds a pair of points $(a,b) \in A \times B$ with minimum distance. This is known as the bichromatic closest pair problem (BCP) for $A$ and $B$. An implicit implementation of Kruskal's algorithm would then add the minimum weight edge from among all the BCP problems. 
An implicit implementation of Bor{\r{u}}vka's algorithm would add one edge for each component. 

\textbf{The initial forest for learning-augmented MST.}
When applying Kruskal's or Bor{\r{u}}vka's algorithm implicitly to $\mathcal{X}$, terminating the algorithm early would produce a forest of disconnected components (see Figure~\ref{fig:kruskal's}). Inspired by this observation, we recently introduced a learning-augmented framework for metric MST where the input can be viewed as a heuristic prediction for the forest that would be produced by terminating a classical algorithm early~\citep{veldt2025approximate}. 
Formally, an \textit{initial forest} $G_t = (\mathcal{X}, E_t)$ for $(\mathcal{X},d)$ is defined by a partitioning $\mathcal{P} = \{P_1, P_2, \hdots, P_t\}$ of $\mathcal{X}$ and a partition spanning tree $T_i = (P_i, E_{T_i})$ for each $i \in [t]$ such that $E_t = \cup_{i =1}^t E_{T_i}$. See Figure~\ref{fig:initialforest}.
Let $P(x)$ denote the partition $x \in \mathcal{X}$ belongs to in $\mathcal{P}$. We say $T_i$ is the $i$th component of $G_t$.

Terminating an exact algorithm early to find an initial forest is prohibitively expensive if one wants to avoid quadratic complexity.
One alternative is to run a fast clustering heuristic (e.g., the simple 2-approximation for $k$-center~\cite{gonzalez1985clustering}) to partition $\mathcal{X}$, and then recursively find an approximate or exact MST for each partition. Another approach is to compute an approximate $k$-nearest neighbors graph for $G_\mathcal{X}$ and then find a spanning forest of it. These and other similar strategies have already been used in prior work to develop fast heuristics (without approximation guarantees) for \emph{Euclidean} MSTs~\citep{almansoori2024fast,chen2013clustering,zhong2015fast,jothi2018fast}. One contribution of our prior work~\citep{veldt2025approximate} was to formalize the notion of an initial forest and introduce a way to measure its quality. To define this measure, let $\mathcal{T}_\mathcal{X}$ denote the set of MSTs of $G_\mathcal{X}$. For a tree $T \in \mathcal{T}_\mathcal{X}$, let $T(\mathcal{P}) = \{ (u,v) \in T  \colon P(u) = P(v)\}$ be the set of edges from $T$ whose endpoints are from the same partition of $\mathcal{P}$. The $\gamma$-overlap of $\mathcal{P}$ is defined to be
\begin{equation*}
	\gamma(\mathcal{P}) = \frac{w_\mathcal{X}(E_t)}{\max_{T \in \mathcal{T}_\mathcal{X}} w_\mathcal{X}(T(\mathcal{P}))}.
\end{equation*}
In other words, $\gamma(\mathcal{P})$ captures the weight of edges that the initial forest has in $\mathcal{P}$, divided by the weight of edges that an optimal MST places inside components. Lower values of $\gamma$ are better, as they indicate that the initial forest overlaps well with some optimal solution. One can use the minimizing property of MSTs to show that $\gamma(\mathcal{P}) \geq 1$, with equality exactly when $G_t$ is contained inside some optimal MST. When $\mathcal{P}$ is clear from context, we will simply write $\gamma = \gamma(\mathcal{P})$. 

\begin{figure}[t]
	\centering
	\begin{subfigure}[b]{0.24\textwidth}
		\centering
		\includegraphics[width=\textwidth]{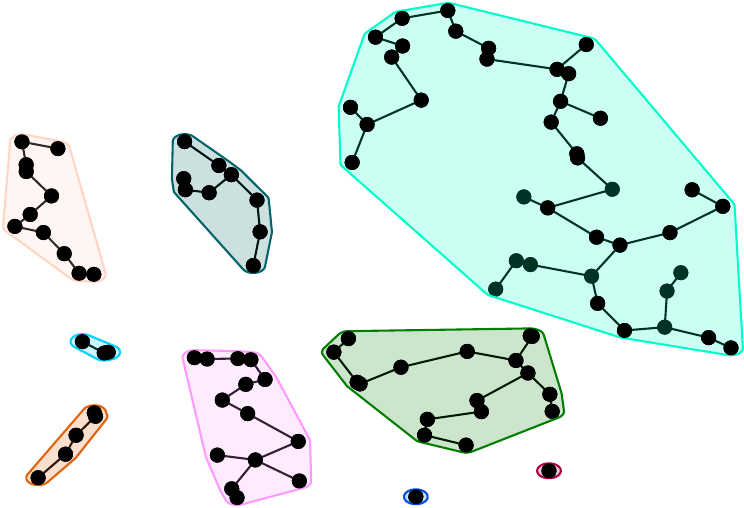}
		\vspace{-10pt}
		\caption{Kruskal stopped early}
		\label{fig:kruskal's}
	\end{subfigure}\hfill
	\begin{subfigure}[b]{0.24\textwidth}
		\centering
		\includegraphics[width=\textwidth]{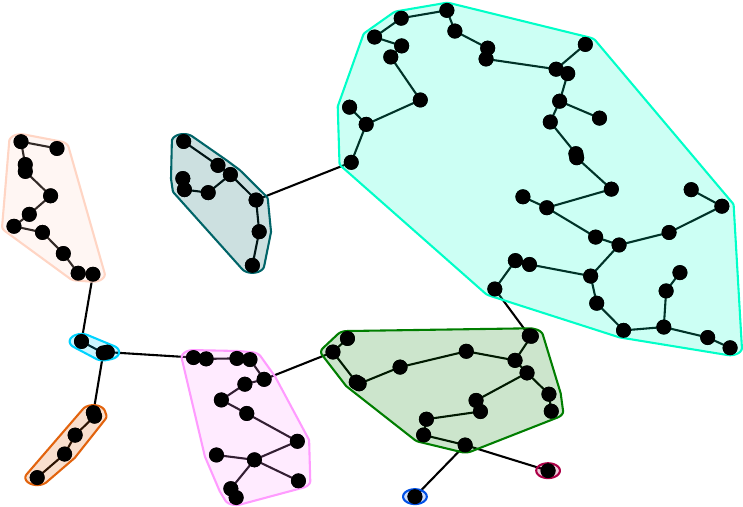}
		\vspace{-10pt}
		\caption{Completed MST}
		\label{fig:compmst}
	\end{subfigure}\hfill
	\begin{subfigure}[b]{0.24\textwidth}
		\centering
		\includegraphics[width=\textwidth]{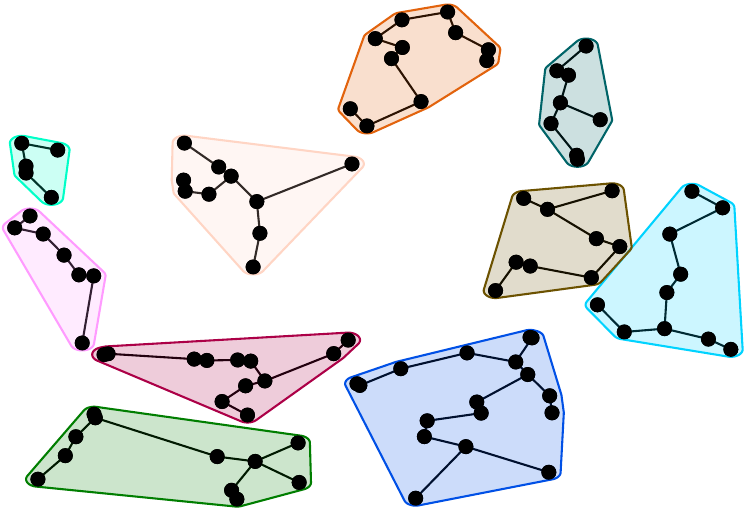}
		\vspace{-10pt}
		\caption{Initial forest $(\mathcal{P}; \{T_i\})$}
		\label{fig:initialforest}
	\end{subfigure}\hfill
	\begin{subfigure}[b]{0.24\textwidth}
		\centering
		\includegraphics[width=\textwidth]{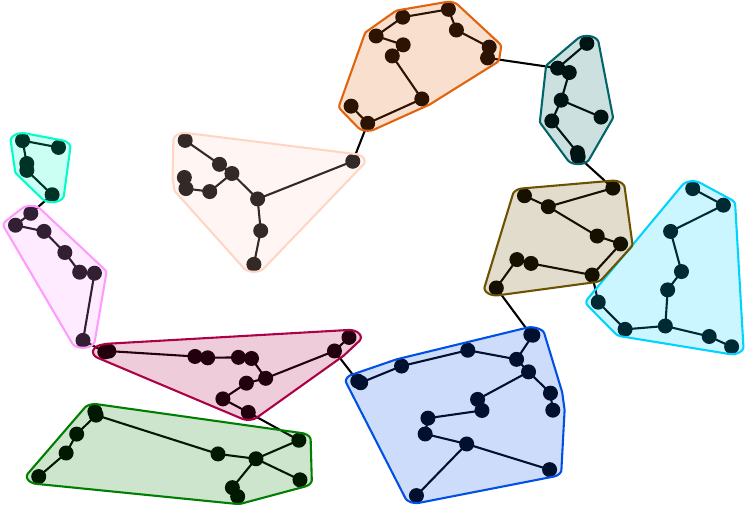}
		\vspace{-10pt}
		\caption{MFC solution}
		\label{fig:mfc}
	\end{subfigure}
	\caption{(a) The forest obtained by terminating Kruskal's algorithm early for a set of 100 points. (b) Running Kruskal's algorithm to the end leads to a full MST. 
		% For this example, the distances are all unique and hence the MST is unique. 
		(c) The initial forest can be viewed as a heuristic prediction for the forest in (a). For this example, $\gamma(\mathcal{P}) \approx 1.06$.
		% which can be determined by dividing the weight of the initial forest with the weight of edges from the true MST in (b) that would be inside components of the initial forest. 
		(d) Solving metric forest completion problem produces a full spanning tree that approximates the true MST.}
	\label{fig:mfcapprox}
\end{figure}

\textbf{Metric forest completion.} Given an initial forest, \textit{Metric Forest Completion} (MFC) is the task of finding a minimum weight spanning tree that contains $E_t$ as a subgraph. Formally:
\begin{equation}
	\label{eq:mfc}
	\begin{array}{ll}
		\minimize & w_\mathcal{X}(E_T) \\
		\text{subject to} & T = (\mathcal{X},E_T) \text{ is a spanning tree for $G_\mathcal{X}$} \\
		& E_t \subseteq E_T.
	\end{array}
\end{equation}
This is equivalent to finding a minimum weight set of edges $M \subseteq E_\mathcal{X}$
such that $M$ \emph{completes} $E_t$, meaning that $M \cup E_t$ spans $\mathcal{X}$. Solving MFC for an initial forest where $\gamma(\mathcal{P}) = 1$ (e.g., obtained by terminating an exact algorithm early) produces an optimal MST (Figure~\ref{fig:compmst}). Applying it to a initial forest with $\gamma(\mathcal{P}) > 1$ (Figure~\ref{fig:initialforest}) produces an approximately optimal spanning tree (Figure~\ref{fig:mfc}).

MFC can be viewed as an MST problem defined over a complete \emph{coarsened graph} $G_\mathcal{P} = (V_\mathcal{P}, E_\mathcal{P})$ where $V_\mathcal{P} = \{v_1, v_2, \hdots, v_t\}$ is the node set and $E_\mathcal{P} = {V_\mathcal{P} \choose 2}$ is all pairs of nodes. Node $v_i$ corresponds to partition $P_i$ for each $i \in [t]$, and the weight between $v_i$ and $v_j$ is defined as the solution to the BCP problem between $P_i$ and $P_j$. Formally, the weight function $w^* \colon E_\mathcal{P} \rightarrow \mathbb{R}^+$ is given by
\begin{equation}
	\label{wabstar}
	w^*(v_i,v_j) = d(P_i, P_j).
	% \min_{x \in P_a, y \in P_b} d(x, y).
\end{equation}
Finding an MST of $G_\mathcal{P}$ with respect to $w^*$, and then mapping the edges in $G_\mathcal{P}$ back to the points in $\mathcal{X}$ that define the weight function $w^*$, solves the MFC problem. The challenge is that exactly computing $w^*$ can take $\Omega(n^2)$ distance queries, in particular when the component sizes are balanced.

\textbf{Existing MFC approximation.} Our previous \mfca{} algorithm approximates MFC by considering only a subset of edges~\citep{veldt2025approximate}. This algorithm selects one arbitrary representative point $r_i \in P_i$ for each $i \in [t]$, and completes the initial forest by adding only edges that are incident to one or two representatives. 
Conceptually this amounts to forming a new weight function $\hat{w} \colon V_\mathcal{P} \rightarrow \mathbb{R}^+$ such that $w^* \leq \hat{w}$, and then finding an MST in $G_\mathcal{P}$ with respect to $\hat{w}$. Previously we showed that this can be accomplished in $O(nt \texttt{Q}_\mathcal{X})$ time (when $G_t$ is given), where $\texttt{Q}_\mathcal{X}$ is the time to query one distance in $\mathcal{X}$. Thus, the algorithm has subquadratic complexity in reasonable cases as long as $t \texttt{Q}_\mathcal{X} = \tilde{O}(1)$. 
We proved that this algorithm returns a spanning tree that approximates the MFC problem to within a factor $(3+\sqrt{5})/2 < 2.62$. Furthermore, it is a learning-augmented algorithm for the original metric MST problem with a parameter-dependent approximation guarantee of $(2\gamma + 1 + \sqrt{4\gamma + 1})/2 < (2\gamma + 1)$~\citep{veldt2025approximate}.

%% file: arxiv/theory.tex
\section{Multi-representative MFC Algorithm}
\begin{algorithm}[tb]
	\caption{\newalg{}$(R = \{R_i \colon i \in [t]\})$}
	\label{alg:newalg}
	\begin{algorithmic}[1]
		\STATE{\bfseries Input:} $\mathcal{X} = \{x_1, x_2, \hdots , x_n\}$, components $\mathcal{P} = \{P_1, P_2, \hdots, P_t\}$, spanning trees $\{T_1, T_2, \hdots, T_t\}$, nonempty $R_i \subseteq P_i$ for each $i \in [t].$
		\STATE {\bfseries Output:} Spanning tree for $G_\mathcal{X} = (\mathcal{X}, E_\mathcal{X}).$
		\FOR{$(i,j) \in {t \choose 2}$}
		\STATE $w_{i \rightarrow j} = \min_{x_i \in P_i, r_j \in R_j} d(x_i, r_j)$  
		\STATE $w_{j \rightarrow i} = \min_{x_j \in P_j, r_i \in R_i} d(x_j, r_i)$
		\STATE $\hat{w}_{ij} = \min \{w_{i \rightarrow j}, w_{j \rightarrow i}\}$ 
		\ENDFOR
		\STATE $\hat{T}_{\mathcal{P}} = \textsf{\small OptMST}(\{\hat{w}_{ij}\}_{i,j \in [t]})$ 
		\STATE Return spanning tree $\hat{T}$ obtained by combining $E_t$ with edges corresponding to $\hat{T}_{\mathcal{P}}$.
	\end{algorithmic}
\end{algorithm}
We present a generalization of \mfca{} that selects a \emph{set} of representatives for each component, rather than only one. 
% This makes it possible to find smaller weight edges to complete the initial forest, with a bounded amount of extra work.
For each $i \in [t]$, let $R_i \subseteq P_i$ be a nonempty subset of representatives for the $i$th component in $\mathcal{P}$. Let $R = \cup_{i = 1}^t R_i$, and define $E_R = \left\{ (r,x) \colon r \in R, x \in \mathcal{X}\right\}$.
% \begin{equation}
% 	\label{eq:repedges}
% 	E_R = R \times \mathcal{X} = \left\{ (r,x) \colon r \in R, x \in \mathcal{X}\right\}.
% \end{equation} 
The new algorithm finds a minimum weight set of edges $\hat{M} \subseteq E_R$ to complete the initial forest. To do so, it finds an MST of the coarsened graph $G_\mathcal{P}$ with respect to a weight function $\hat{w} \colon E_\mathcal{P} \rightarrow \mathbb{R}^+$
given by
\begin{align}
	\label{wijhat}
	% \hat{w}_{ij} = \hat{w}(v_i, v_j) = \min \left\{ \min_{x_i \in P_i, r_j \in R_j} d(x_i,r_j),   \min_{x_j \in P_j, r_i \in R_i} d(x_j,r_i)   \right\}
	%\hat{w}_{ij} = \hat{w}(v_i, v_j)
	\hat{w}(v_i, v_j) = \min \left\{ d(P_i, R_j), d(P_j,R_i)  \right\} .
\end{align}
% In other words, the algorithm finds the minimum distance between $R_i$ and $P_j$, finds the minimum distance between $R_j$ and $P_i$, and then takes the smaller distance between the two. It 
% Observe that $w^* \leq \hat{w}$. 
For each pair $(v_i, v_j)$, the algorithm keeps track of the points $x,y \in \mathcal{X}$ such that $\hat{w}(v_i, v_j) = d(x,y)$, in order to map an MST in $G_\mathcal{P}$ back to the edge set $\hat{M} \subseteq E_R$. We denote this algorithm by \newalg{}$(R)$ or \newalg{} when $R$ is clear from context. Pseudocode is given in Algorithm~\ref{alg:newalg}, which relies on a black-box function \textsf{\small OptMST} that computes a minimum spanning tree for an input graph.
By design, \newalg{} is a simple way to interpolate between the existing \mfca{} algorithm (when $|R_i| = 1$ for every $i \in [t]$) and an exact algorithm (when $R = \mathcal{X}$). 
Our key technical contributions are to provide an approximation analysis for this algorithm (Section~\ref{sec:approx}), and present an approximately optimal strategy for selecting $R$ (Section~\ref{sec:bestrep}).

\subsection{Approximation analysis for fixed $R$.}
\label{sec:approx}
To quantify the quality of spanning trees returned by \newalg{}$(R)$, define the \emph{cost} of $P_i$ to be the maximum distance between any point in $P_i$ and its nearest representative:
\begin{equation}
	\label{eq:cost}
	\text{cost}(P_i, R_i) = \max_{x \in P_i} \min_{r \in R_i} d(x,r).
\end{equation}
We extend this to a cost function on $\mathcal{P}$ by defining $\text{cost}(\mathcal{P}, R) = \sum_{i = 1}^t \text{cost}(P_i, R_i)$.
When $R$ is clear from context, we write $\text{cost}(P_i) = \text{cost}(P_i, R_i)$ and $\text{cost}(\mathcal{P}) = \text{cost}(\mathcal{P}, R)$. The following theorem uses this cost
% bounds the additive approximation error for \newalg{}, and 
to define an instance-specific multiplicative approximation bound.
\begin{theorem}
	\label{thm:costbound}
	% Let $\hat{T}$ be the tree returned by \newalg{}$(R)$ for an initial forest with overlap parameter $\gamma$. There exists a tree $T^*$ that optimally solves MFC and an MST $T_\mathcal{X}$ of $G_\mathcal{X}$, such that 
	% \begin{equation}
	%     w_\mathcal{X}(\hat{T}) \leq w_\mathcal{X}(T^*) + \textup{cost}(\mathcal{P},R) \quad \text{ and } \quad w_\mathcal{X}(\hat{T}) \leq \gamma w_\mathcal{X}(T_\mathcal{X}) + \textup{cost}(\mathcal{P},R). 
	% \end{equation}
	% % Furthermore, \newalg{}$(R)$ 
	% Furthermore, $\hat{T}$ 
	\newalg{}$(R)$ is an $\alpha$-approximation for MFC and an $(\alpha\gamma)$-approximation for metric MST where $\gamma$ is the overlap parameter for the initial forest and $\alpha= 1 + \textup{cost}(\mathcal{P},R)/w_{\mathcal{X}}(E_t)$.
	% \begin{align*}
	% 	% \alpha= 1 + \frac{\textup{cost}(\mathcal{P},R)}{w_{\mathcal{X}}(E_t)}.
	%     	\alpha= 1 + {\textup{cost}(\mathcal{P},R)}/{w_{\mathcal{X}}(E_t)}.
	% \end{align*}
\end{theorem}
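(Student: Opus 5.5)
Let $T^*$ be an optimal MFC tree and $M^* = T^* \setminus E_t$ its completing edges, so $w_\mathcal{X}(T^*) = w_\mathcal{X}(E_t) + w_\mathcal{X}(M^*)$. Let $\hat{M}$ be the edges added by \newalg{}$(R)$, so $w_\mathcal{X}(\hat{T}) = w_\mathcal{X}(E_t) + w_\mathcal{X}(\hat{M})$. The plan is to prove the additive bound
\[
w_\mathcal{X}(\hat{M}) \le w_\mathcal{X}(M^*) + \text{cost}(\mathcal{P},R).
\]
This gives $w_\mathcal{X}(\hat{T}) \le w_\mathcal{X}(T^*) + \text{cost}(\mathcal{P},R)$. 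Since $w_\mathcal{X}(E_t) \le w_\mathcal{X}(T^*)$, we have $\text{cost}(\mathcal{P},R) = (\alpha-1)w_\mathcal{X}(E_t) \le (\alpha-1)w_\mathcal{X}(T^*)$, and hence $w_\mathcal{X}(\hat{T}) \le \alpha\, w_\mathcal{X}(T^*)$.

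For the MST guarantee, take an MST $T_\mathcal{X}$ attaining the maximum in the definition of $\gamma$. Contracting the partitions, the edges of $T_\mathcal{X}$ that cross partitions connect $G_\mathcal{P}$. Hence they contain a spanning tree of $G_\mathcal{P}$, whose $w^*$-weight is at most their $\mathcal{X}$-weight. This gives $w_\mathcal{X}(M^*) \le w_\mathcal{X}(T_\mathcal{X}) - w_\mathcal{X}(T_\mathcal{X}(\mathcal{P}))$. Also $w_\mathcal{X}(E_t) = \gamma\, w_\mathcal{X}(T_\mathcal{X}(\mathcal{P}))$. Therefore
\[
w_\mathcal{X}(\hat{T}) \le \alpha\big(w_\mathcal{X}(E_t) + w_\mathcal{X}(M^*)\big) \le \alpha\big(\gamma\, w_\mathcal{X}(T_\mathcal{X}(\mathcal{P})) + w_\mathcal{X}(T_\mathcal{X}) - w_\mathcal{X}(T_\mathcal{X}(\mathcal{P}))\big) \le \alpha\gamma\, w_\mathcal{X}(T_\mathcal{X}),
\]
where the last step uses $\gamma \ge 1$.

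The main step is the additive bound. $M^*$ corresponds to a spanning tree $T^*_\mathcal{P}$ of $G_\mathcal{P}$ with $w^*$-weights. Because $\hat{T}_\mathcal{P}$ is an MST under $\hat{w}$, it suffices to show $\hat{w}(T^*_\mathcal{P}) \le w^*(T^*_\mathcal{P}) + \sum_i \text{cost}(P_i)$. Pointwise, for an edge $(v_i,v_j)$ realized by $x\in P_i$, $y\in P_j$, choose the nearest $r\in R_j$ to $y$. The triangle inequality gives $\hat{w}(v_i,v_j) \le d(x,r) \le d(x,y) + \text{cost}(P_j)$. So each tree edge is charged the cost of one endpoint's component, and the endpoint may be chosen freely.

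Root $T^*_\mathcal{P}$ at any node and charge each edge to its child endpoint. Every node except the root is a child exactly once, so the total charge is at most $\sum_{i\ne \text{root}} \text{cost}(P_i) \le \text{cost}(\mathcal{P},R)$. This orientation argument is where the earlier 2.62 analysis is improved, and it is the step to get right. The remaining points are checking that $\hat{w}$ is realized by actual edges in $E_R$, and that costs are nonnegative so dropping the root term is harmless.
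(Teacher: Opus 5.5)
Your proposal is correct and follows essentially the paper's route. You prove the additive bound $w_\mathcal{X}(\hat{T}) \le w_\mathcal{X}(T^*) + \textup{cost}(\mathcal{P},R)$ by charging each edge of the coarsened optimal tree to a distinct endpoint (rooting and charging to the child is the same assignment as the paper's leaf-peeling), combined with the triangle inequality through a nearest representative, then use $w_\mathcal{X}(E_t)\le w_\mathcal{X}(T^*)$ and split a $\gamma$-attaining MST into inside and crossing edges for the MST part. The remaining differences are cosmetic: you start from an arbitrary optimal MFC tree instead of one built from an MST of $G_\mathcal{P}$ under $w^*$, and you route through the other endpoint's representative, which is equivalent by symmetry.
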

% \begin{proof}
\textit{Proof.}
	Let $T^*_\mathcal{P}$ denote an MST for the coarsened graph $G_\mathcal{P}$ with respect to $w^*$ as defined in Eq.~\eqref{wabstar}. This $T^*_\mathcal{P}$ can be mapped to an edge set $M^*\subseteq \mathcal{X}$ that optimally solves MFC. Let $T^*$ be the spanning tree for $G_\mathcal{X}$ obtained by combining $M^*$ with the initial forest edges $E_t$. Thus,
	%	The optimal objective for MFC is then given by
	\begin{equation}
		\label{eq:optcost}
		w_\mathcal{X}(T^*) = w_\mathcal{X}(M^*) + w_\mathcal{X}(E_t) = w^*(T^*_\mathcal{P}) + w_\mathcal{X}(E_t).
	\end{equation}
	Let $\hat{T}_\mathcal{P}$ be the MST in $G_\mathcal{P}$ with respect to $\hat{w}$ that 
	\newalg{}  finds, and $\hat{M}$ be the edge set in $\mathcal{X}$ it corresponds to. Then the spanning tree $\hat{T}$ returned by \newalg{} has weight
	\begin{align}
		\label{eq:algcost}
		w_\mathcal{X}(\hat{T}) = w_\mathcal{X}(\hat{M}) + w_\mathcal{X}(E_t) = \hat{w}(\hat{T}_\mathcal{P}) + w_\mathcal{X}(E_t).
	\end{align}
	Since $T^*_\mathcal{P}$ is a tree, we can assign each edge in $T^*_\mathcal{P}$ to one of its endpoints in such a way that one node in $G_\mathcal{P}$ is assigned no edge, and every other node in $G_\mathcal{P}$ is assigned to exactly one edge of $T^*_\mathcal{P}$. This can be accomplished by selecting a node $v$ of degree 1 from $T_\mathcal{P}^*$, assigning $v$'s only incident edge to $v$, and then removing $v$ and its incident edge before recursing. This continues until there is only one node of $G_\mathcal{P}$ with no adjacent edges. We write $(v_i, v_j) \in T^*_\mathcal{P}$ to indicate an edge in this tree between $v_i$ and $v_j$ that is assigned to node $v_i$. Since each node is assigned at most one edge, we have that
	\begin{align}
		\label{eq:orientedges}
		\sum_{(v_i, v_j) \in T^*_\mathcal{P}} \textup{cost}(P_i) \leq \sum_{i = 1}^t \textup{cost}(P_i) = \textup{cost}(\mathcal{P}).
	\end{align}
	For an arbitrary edge $(v_i, v_j) \in T^*_\mathcal{P}$, let $(x_a, x_b) \in P_i \times P_j$ be points in $\mathcal{X}$ defining the optimal edge weight $w^*(v_i, v_j) = d(x_a, x_b)$. Let $z \in R_i$ be the closest representative in $P_i$ to point $x_a$, meaning
	\begin{equation*}
		d(x_a, z) = \min_{r \in R_i} d(x_a, r) \leq \max_{x \in P_i} \min_{r \in R_i} d(x,r) = \textup{cost}(P_i).
	\end{equation*} 
	By definition, $\hat{w}(v_i, v_j)$ is at most the distance between $z$ and 
	any point in $P_j$, which implies that $\hat{w}(v_i, v_j) \leq d(z, x_b)$. 
	%	Furthermore, $d(z,x_a) \leq w_\mathcal{X}(T_i)$ since the tree $T_i$ must connect points $z$ and $x_a$. 
	Therefore
	\begin{align}
		\label{eq:wij2}
		\hat{w}(v_i, v_j) \leq d(z, x_b) \leq d(x_a, x_b) + d(x_a, z) \leq {w}^*(v_i, v_j) + \textup{cost}(P_i).
	\end{align}
	Combining the bounds in~\eqref{eq:orientedges}  and~\eqref{eq:wij2} gives
	\begin{equation}
		\label{eq:hatst}
		\hat{w}(T_\mathcal{P}^*) = \sum_{(v_i, v_j) \in T_\mathcal{P}^*} \hat{w}(v_i, v_j) \leq \sum_{(v_i, v_j) \in T_\mathcal{P}^*} \left[ w^*(v_i,v_j) + \textup{cost}(P_i) \right]\leq w^*(T_\mathcal{P}^*) + \textup{cost}(\mathcal{P}).
	\end{equation}
	Putting these observations together proves the approximation for MFC:
	%& (\text{}) \\
	\begin{align*}
		w_\mathcal{X}(\hat{T})  &= \hat{w}(\hat{T}_\mathcal{P}) + w_\mathcal{X}(E_t) & (\text{Eq.~\ref{eq:algcost}}) \\
		& \leq \hat{w}(T_\mathcal{P}^*) + w_\mathcal{X}(E_t) & (\text{$\hat{T}_\mathcal{P}$ is optimal for $\hat{w}$}) \\
		& \leq w^*(T_\mathcal{P}^*) + \textup{cost}(\mathcal{P}) + w_\mathcal{X}(E_t) & (\text{Eq.~\ref{eq:hatst}}) \\
		&= w_\mathcal{X}(T^*) + \textup{cost}(\mathcal{P})   & (\text{Eq.~\ref{eq:optcost}}) \\
		&\leq \left(1 + \frac{\textup{cost}(\mathcal{P})}{ w_\mathcal{X}(E_t)} \right) w_\mathcal{X}(T^*) = \alpha w_\mathcal{X}(T^*)
	\end{align*}
	where in the last step we have used the fact that $w_\mathcal{X}(E_t) \leq w_\mathcal{X}(T^*)$. To turn this bound into an $(\alpha\gamma)$-approximation for the original MST problem, it suffices to prove $w_\mathcal{X}(T^*) \leq \gamma w_\mathcal{X}(T_\mathcal{X})$, where $T_\mathcal{X}$ is an MST of $G_\mathcal{X}$ that leads to the smallest overlap parameter $\gamma$ for the initial forest. Let $I_\mathcal{X}$ denote the set of edges of $T_\mathcal{X}$ that are inside components $\mathcal{P}$, meaning that
	\begin{equation}
		\label{eq:ieratio}
		w_\mathcal{X}(E_t) = \gamma w_\mathcal{X}(I_\mathcal{X}).
	\end{equation}
	Furthermore, let $B_\mathcal{X} = T_\mathcal{X} \setminus I_\mathcal{X}$ be the set of edges in $T_\mathcal{X}$ that cross between components of $\mathcal{P}$. Observe that $B_\mathcal{X}$ must correspond to a spanning subgraph of the coarsened graph $G_\mathcal{P}$. If not, $T_\mathcal{X}$ would not provide a connected path between all pairs of components and hence would not span $G_\mathcal{X}$. The fact that $T_\mathcal{P}^*$ is a minimum weight spanner for $G_\mathcal{P}$ guarantees that
	$w^*(T_\mathcal{P}^*) \leq w_\mathcal{X}(B_\mathcal{X})$. Combined with~\eqref{eq:ieratio}, this gives the desired inequality:
	\begin{equation*}
    % \begin{array}{lr}
		w_\mathcal{X}(T^*) = w_\mathcal{X}(E_t) + w^*(T_\mathcal{P}^*) \leq \gamma w_\mathcal{X}(I_\mathcal{X}) + w_\mathcal{X}(B_\mathcal{X}) \leq \gamma w_\mathcal{X}(T_\mathcal{X}).  \tag*{\qed}
        % \end{array}
	\end{equation*}
    % \qedhere
% \end{proof}

As a corollary, we improve on the previous analysis that proved \mfca{}
is a 2.62-approximation for MFC and a $(2\gamma+1)$-approximation for metric MST. 
\begin{corollary}
	\label{cor:mfcapp}
	\mfca{} is a 2-approximation for MFC, and a $(2\gamma)$-approximation for MST where $\gamma$ is the overlap parameter for the initial forest.
\end{corollary}
\begin{proof}
	\textsf{\small MFC-Approx} is equivalent to \newalg{} when $R_i$ is a single arbitrary point from $P_i$ for each $i \in \{1,2, \hdots, t\}$. We know $\textup{cost}(P_i) \leq w_\mathcal{X}(T_i)$, since $\textup{cost}(P_i)$ equals the distance between two specific points in $P_i$, and there is a path between these two points in $T_i$. Summing across all components gives $\textup{cost}(\mathcal{P}) \leq w_\mathcal{X}(E_t)$. This in turn implies that $\alpha \leq 2$, proving the bound.
\end{proof}
In addition to providing better approximation factors, the analysis above is shorter and simpler than our prior analysis for \mfca{}~\citep{veldt2025approximate}. The prior analysis relied on partitioning edges in the coarsened graph based on whether or not they were $\beta$-bounded (meaning $\hat{w}_{ij} \leq \beta w_{ij}^*$) for some initially unspecified $\beta > 1$. The analysis then considered two other spanning trees of the coarsened graph that are optimal with respect to two other hypothetical weight functions that depend on $\beta$. This led to bounds for different parts of the weight of $\hat{T}$ (the tree returned by \mfca{}), in terms of different expressions involving $\beta$. The best approximation guarantee of $\beta = (3 + \sqrt{5})/2$ was obtained by solving a quadratic equation resulting from the bounds. 
In contrast, the proof of Theorem~\ref{thm:costbound} avoids the need to partition edges based on an unknown $\beta$, work with hypothetical weight functions, or solve for the best $\beta$ in this way. Although some proof details mirror our prior analysis~\citep{veldt2025approximate}, the new analysis is ultimately able to basic facts about distances and triangle inequalities more directly. This leads to an analysis that is simpler, shorter, and tighter.

\begin{figure}[t]
	\centering
    \includegraphics[width=.9\textwidth]{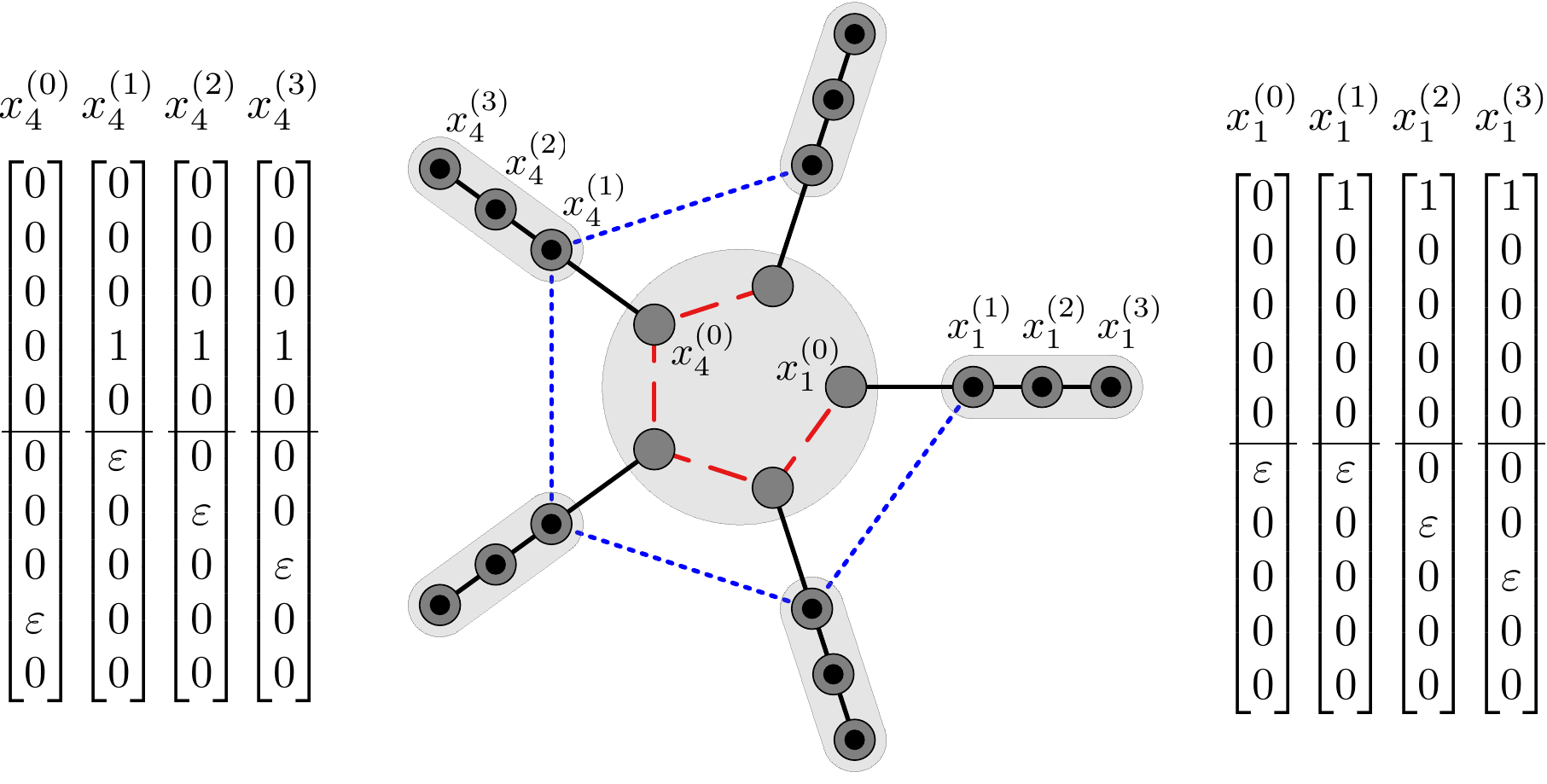}
    \caption{The MFC instance from Theorem~\ref{thm:tight} when $\ell = 3$ and $p = 5$. Initial forest edges are black solid lines; these define $p$ components with $\ell + 1$ points each. 
    Points with black centers are representatives. Two points have distance $\varepsilon$ if they are in the same gray enclosing region, otherwise they have distance 1. These distances can be realized by associating each point with a vector of length $p + \max\{\ell,p\}$ and using $\ell_\infty$ distance. The optimal spanning tree is achieved by adding the red dashed edges, which all have weight $\varepsilon$. \newalg{} only adds edges incident to representatives, and therefore completes the forest with $\ell -1$ edges of weight 1 (e.g., dotted blue edges).}
    \label{fig:hard}
\end{figure}

\textbf{Tight instance.}
We now show that our improved approximation guarantees are 
tight in the worst-case. In particular, for a fixed number of $\ell$ representatives per component, the approximation factor in the following theorem converges to $2 = 2\gamma$ as $p \rightarrow \infty$ and $\varepsilon \rightarrow 0$. 
 See Figure~\ref{fig:hard} for a visualization of the worst-case instance.
\begin{theorem}
	\label{thm:tight}
	Let $p$ and $\ell$ be arbitrary positive integers, and $\varepsilon \in (0,1)$ be arbitrary. There exists an initial forest with $\gamma(\mathcal{P}) = 1$ and a choice of $\ell$ representatives per component for which \newalg{} returns a tree that is a factor
	\begin{align*}
		\frac{(2 + \ell \varepsilon - \varepsilon)p - 1}{(1+\varepsilon\ell) p - \varepsilon}
	\end{align*}
	larger than the tree returned by optimally solving MFC (equivalent here to the metric MST problem).
\end{theorem}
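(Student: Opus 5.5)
The plan is to build the instance of Figure~\ref{fig:hard} explicitly as a two-valued metric, and then compute three quantities directly: the forest weight, the optimal MFC value, and the value returned by \newalg{}. We will not need Theorem~\ref{thm:costbound}. We only need to check that the target ratio splits as $\big[w_\mathcal{X}(E_t)+(p-1)\cdot 1\big]\big/\big[w_\mathcal{X}(E_t)+(p-1)\varepsilon\big]$ with $w_\mathcal{X}(E_t)=p(1+(\ell-1)\varepsilon)$. Expanding, the numerator is $(2+\ell\varepsilon-\varepsilon)p-1$ and the denominator is $(1+\ell\varepsilon)p-\varepsilon$, which are exactly the expressions in the statement.

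\emph{Construction.} Take $p$ components $P_1,\dots,P_p$. Each $P_i=\{r_{i,1},\dots,r_{i,\ell},u_i\}$ has $\ell+1$ points, and we choose $R_i=\{r_{i,1},\dots,r_{i,\ell}\}$. Define $p+1$ disjoint ``gray'' classes: $C_i=R_i$ for each $i\in[p]$, and $U=\{u_1,\dots,u_p\}$. Set $d(x,y)=\varepsilon$ if $x\neq y$ lie in the same class and $d(x,y)=1$ otherwise. This is a metric because $\varepsilon<1$. Indeed, if $d(x,z)=1$ then $x$ and $z$ lie in different classes, so $x,y,z$ cannot all lie in one class, and hence $d(x,y)+d(y,z)\ge 1$. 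Optionally, one can realize the metric in $\ell_\infty$ as the figure caption says, but this is not needed. The tree $T_i$ is the path $r_{i,1}\cdots r_{i,\ell}$ (with $\ell-1$ edges of weight $\varepsilon$) together with the edge $(u_i,r_{i,1})$ of weight $1$. Therefore $w_\mathcal{X}(T_i)=1+(\ell-1)\varepsilon$.

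\emph{Showing $\gamma=1$.} Run Kruskal's algorithm on $G_\mathcal{X}$. The weight-$\varepsilon$ edges can be chosen to be the paths inside each $R_i$ and a path on $U$. This leaves $p+1$ groups, namely $R_1,\dots,R_p$ and $U$. These groups are then joined by $p$ edges of weight $1$, and the edges $(u_i,r_{i,1})$ are a valid choice for them. The resulting MST contains $E_t$, so $\gamma(\mathcal{P})=1$, and optimal MFC coincides with metric MST.

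\emph{Optimal value.} Since $\gamma(\mathcal{P})=1$, the optimum completes $E_t$ with the $p-1$ path edges on $U$, each of weight $\varepsilon$. This gives
\begin{equation*}
w_\mathcal{X}(T^*)=p(1+(\ell-1)\varepsilon)+(p-1)\varepsilon .
\end{equation*}
As a check, any completion needs at least $p-1$ edges, each of weight at least $\varepsilon$, so this is indeed optimal.

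\emph{Algorithm value.} Every edge that \newalg{} may use across components has an endpoint in some $R_i$. Such an edge is either rep--rep across different classes or rep--$u_j$, and in both cases it has distance $1$. Hence $\hat{w}(v_i,v_j)=1$ for all pairs $i\neq j$, and the algorithm adds $p-1$ edges of weight $1$. Its total is $p(1+(\ell-1)\varepsilon)+p-1$. Dividing the two totals gives the claimed factor. The only delicate point is confirming $\gamma=1$, which requires exhibiting the MST that contains $E_t$, as done above. Everything else is direct arithmetic.
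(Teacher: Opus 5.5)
Your proof is correct and follows the paper's argument essentially verbatim: the same $p$ components of $\ell+1$ points, the same forest with one weight-$1$ edge per component, the same optimal completion through the non-representative points, and the same observation that every representative-incident cross edge has weight $1$. The differences are cosmetic. You define the two-valued metric abstractly via equivalence classes and check the triangle inequality directly, rather than realizing it in $\ell_\infty$ on $(p+\max\{\ell,p\})$-dimensional vectors as the paper does. You also spell out the Kruskal argument for $\gamma(\mathcal{P})=1$ and the lower bound for MFC optimality, which the paper leaves as ``one can check''.
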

\begin{proof}
    We will create an initial forest with $p$ components. Each component will have $\ell+1$ points, of which $\ell$ are representatives. All points will be represented by $(p + p')$-dimensional vectors, where $p' = \max\{\ell,p\}$. Let $x_{i}^{(j)}$ denote the $j$th point in the $i$th component, where $i \in [p]$ and $j \in \{0,1, \hdots, \ell\}$.
    Let $x_{i}^{(0)}$ be the only point in component $i \in [p]$ that is \emph{not} a representative.

	To define points more precisely, for $i \in [p+p']$ we let $\textbf{e}_i$ denote the $(p + p')$-dimensional vector defined by
	 \begin{align*}
	 	\textbf{e}_i(j) = 
	 	\begin{cases} 1 & \text{if $i = j$}\\
	 		0 & \text{otherwise.}
	 	\end{cases}
	 \end{align*}
	For $i \in [p]$ define
     \begin{align*}
        x_{i}^{(0)} &= \varepsilon \cdot \textbf{e}_{p+i}\\
	 	x_{i}^{(j)} &= \textbf{e}_i + \varepsilon \cdot \textbf{e}_{p+j} \quad \text{ for $j \in [\ell]$.}
	 \end{align*}
    
    Let $R = \{x_{i}^{(j)} \colon i \in [t], j \in [\ell]\}$ denote the set of representatives. Let $S = \{x_{i}^{(0)} \colon i \in [t]\}$ be the remaining points, which we refer to as \emph{small} points since they have norm $\varepsilon$. We consider the vector space $\mathcal{X} = R \cup S$ where distance is defined by the $\ell_\infty$ norm. This means that
    \begin{align*}
		d(x_{i}^{(j)},x_{a}^{(b)}) = \| x_{i}^{(j)} - x_{a}^{(b)} \|_\infty =  \begin{cases}
			\varepsilon & \text{ if $i = a$ and $b > 0$ and $j > 0$}\\
			\varepsilon & \text{ if $b = j = 0$}\\
            1 & \text{otherwise}.
		\end{cases}
	\end{align*}
    In other words, the distance between two representatives in the same component is $\varepsilon$, the distance between two points in $S$ is $\varepsilon$, and the distance between every other pair of points is 1. See Figure~\ref{fig:hard}.

    To define the edges $E_t$ in the initial forest, for each $i \in [t]$ we add an edge from $x_i^{(j)}$ to $x_i^{(j+1)}$ for $j = 0,1, \hdots, \ell-1$. The first edge has weight 1, and the next $\ell-1$ edges have weight $\varepsilon$. Thus, the weight of the initial forest is:
    \begin{align*}
        w(E_t) = p + p \cdot \varepsilon \cdot (\ell-1).
    \end{align*} 
	Since $\varepsilon < 1$, the optimal solution to MFC is to add a spanning tree on small points. This costs $\varepsilon \cdot (p-1)$, so if $T^*$ represents an optimal tree for the MFC problem,
	\begin{align}
		w_\mathcal{X}(T^*) = p + p \cdot \varepsilon \cdot (\ell-1) + \varepsilon (p-1) = (1 + \varepsilon \ell) p - \varepsilon.
	\end{align}
	One can check that this tree is also optimal for the MST problem, so $\gamma(\mathcal{P}) = 1$.

    \newalg{} only adds edges that are incident to one or more representatives, so this algorithm adds $p-1$ edges of weight 1. Thus, if $\hat{T}$ is the tree returned by the algorithm, we have
	\begin{align*}
		w_\mathcal{X}(\hat{T}) = p + p \cdot \varepsilon \cdot (\ell-1) + (p-1) = (2 + \varepsilon \ell - \varepsilon)p - 1.
	\end{align*}
	Taking the ratio $w_\mathcal{X}(\hat{T})/ w_\mathcal{X}(T^*)$ yields the stated approximation factor.
\end{proof}
Theorem~\ref{thm:tight} holds for a pathological construction and arbitrarily chosen representatives.
By choosing representatives strategically, we will see that the approximation guarantee in Theorem~\ref{thm:costbound} can be far better in practice.

\subsection{The Best Representatives Problem}
\label{sec:bestrep}
We now focus on finding a set $R$ that optimizes the approximation ratio in Theorem~\ref{thm:costbound}. Let $b$ be a nonnegative \emph{budget}, denoting the number of representatives $R$ is allowed to contain {beyond} having one representative per component. The Best Representatives problem (\textsc{BestReps}) is defined as:
\begin{equation}
	\label{eq:best}
	\begin{array}{ll}
		\minimize & \text{cost}(\mathcal{P},R) = \sum_{i = 1}^t \max_{x \in P_i} \min_{r \in R_i} d(x,r) \\
		\text{subject to} & |R_i| \geq 1 \quad \forall i \in [t] \\
		& \sum_{i = 1}^t (|R_i|-1 ) \leq b.
	\end{array}
\end{equation}
If $t = 1$, this is equivalent to $k$-center with $k = b+1$. Thus, \textsc{BestReps} is a generalization of $k$-center where there are multiple instances of points to cluster and the budget for cluster centers is shared across instances.\footnote{Note that this connection to the $k$-center problem is completely independent from the fact that one can also use a $k$-center algorithm to find the initial forest.} 
Since the problem is NP-hard even for $t = 1$, it is impractical to solve optimally. However, we obtain a fast $2$-approximation by combining an approximation algorithm for standard $k$-center (the $t = 1$ case) with a dynamic programming strategy for allocating budgets.\footnote{An LLM was used to search for related work for this multi-instance $k$-center generalization, and also generated ideas for developing the 2-approximation algorithm for it. See Appendix~\ref{appendix:dp} for details.}
% a strategy for finding a 2-approximation that was initially incorrect, but with refinement led to the 2-approximation we develop. LLMs were not used in any way in writing up the final results.}

\textbf{Greedy $k$-center for approximating allocation benefit.} 
For $i \in [t]$, we define 
\begin{align*}
	c^*_i(j) = \min_{R_i \colon |R_i| = j} \textup{cost}(P_i, R_i) \quad \text{ for } j \in [b+1].
\end{align*}
This captures the benefit for allocating $j$ representatives to cluster $P_i$. Computing $c^*_i(j)$ is equivalent to solving an NP-hard $k$-center problem on the set $P_i$ with $k = j$. We efficiently approximate this function for all $j \leq b+1$ by running the greedy 2-approximation of~\cite{gonzalez1985clustering} for $k$-center with $k = b+1$. This method starts by choosing an arbitrary first cluster center. At iteration $j \leq k$, it chooses the $j$th cluster center to be the point that is farthest away from the first $j -1$ cluster centers. Let $R_{i,j}$ be the the first $j$ cluster centers found by this procedure, and define
\begin{align*}
	\hat{c}_i(j) = \textup{cost}(P_i, R_{i,j}) \quad \text{ for } j \in [b+1].
\end{align*}
By the algorithm's 2-approximation guarantee, we know $\hat{c}_i(j) \leq 2c^*_i(j)$ for $i \in [t]$ and $j \in [b+1]$.
%]we know that for every $i \in \{1,2, \dots, t\}$ and every $j \in \{1,2, \dots, b+1\}$, we have .

\textbf{DP for allocating representatives.}
We allocate representatives to components by solving
\begin{equation}
	\label{eq:bestequiv}
	\minimize \quad \sum_{i = 1}^t \hat{c}_i(b_i+ 1) \quad \text{subject to }  \sum_{i = 1}^t b_i = b \text{ and } b_i \geq 0 \quad \forall i \in [t],
\end{equation}
where $b_i \geq0$ represents the number of \emph{extra} representatives assigned to $P_i$. 
This is a variant of the knapsack problem where the objective function is nonlinear, all items have weight 1, and we allow repeat items ($b_i \geq 1$). If the $\hat{c}_i$ functions are already computed, this can be solved optimally in $O(tb^2)$ time via dynamic programming (DP). The DP approach for problems in this form is standard. 
We provide full details in Appendix~\ref{appendix:dp} for completeness. Combining this DP approach with the following result leads to a 2-approximate algorithm for \textsc{BestReps}.
\begin{theorem}
	\label{thm:dp}
	Let $\{\hat{b}_i \colon i \in [t]\}$ be the optimal solution to Problem~\eqref{eq:bestequiv}. For $i \in [t]$, define ${R}_i$ to be the first $\hat{b}_i +1$ cluster centers chosen by running the greedy $2$-approximation for $k$-center on $P_i$.  Then $\{R_i \colon i \in [t]\}$ is a $2$-approximate solution for \textsc{BestReps}.
\end{theorem}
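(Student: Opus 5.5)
Let $\{R_i^* \colon i \in [t]\}$ be an optimal solution to \textsc{BestReps} (Problem~\eqref{eq:best}) and set $b_i^* = |R_i^*| - 1 \geq 0$. Then $\sum_i b_i^* \leq b$. The plan is to compare the output's cost to $\sum_i c_i^*(b_i^*+1)$ through the DP objective, in three steps.

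\emph{Step 1 (output cost equals the DP value).} By definition, $\textup{cost}(P_i, R_i) = \textup{cost}(P_i, R_{i,\hat b_i+1}) = \hat c_i(\hat b_i+1)$. Summing over $i$, the cost of the returned solution is exactly the optimal value of Problem~\eqref{eq:bestequiv}. The output is also feasible: $|R_i| \geq 1$ and $\sum_i (|R_i|-1) = \sum_i \hat b_i = b$. One detail needs handling here. If $\hat b_i + 1$ exceeds $|P_i|$, the greedy procedure runs out of new points, so I would adopt the convention that extra centers are simply repeated, or equivalently that $\hat c_i(j) = 0$ once $j \geq |P_i|$.

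\emph{Step 2 (transfer the optimum into the DP).} Problem~\eqref{eq:bestequiv} uses the equality constraint $\sum_i b_i = b$, whereas the optimum only satisfies $\sum_i b_i^* \leq b$. I would pad the allocation: add the slack $b - \sum_i b_i^*$ to any single component, giving a vector $b'$ with $b'_i \geq b_i^*$ and $\sum_i b'_i = b$. Since $b_i^* + 1 \leq b+1$, all indices stay within the range where $\hat c_i$ is defined. Monotonicity then gives $\hat c_i(b'_i+1) \leq \hat c_i(b_i^*+1)$, because the greedy centers are nested, $R_{i,j} \subseteq R_{i,j+1}$, and adding centers never increases the max-min distance. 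Consequently, by optimality of $\hat b$,
\[ \sum_i \hat c_i(\hat b_i+1) \leq \sum_i \hat c_i(b'_i+1) \leq \sum_i \hat c_i(b_i^*+1). \]

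\emph{Step 3 (apply the per-$k$ greedy guarantee).} The Gonzalez greedy is a 2-approximation for every prefix length $j$, so $\hat c_i(j) \leq 2 c_i^*(j)$. Also $c_i^*(b_i^*+1) \leq \textup{cost}(P_i, R_i^*)$, since $R_i^*$ is a feasible set of size $b_i^*+1$. Chaining these bounds gives
\[ \sum_i \textup{cost}(P_i, R_i) \leq 2\sum_i \textup{cost}(P_i, R_i^*) = 2\,\textup{cost}(\mathcal{P}, R^*). \]

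The main obstacle is not conceptual but lies in the bookkeeping of Step 2, namely the mismatch between the "$\leq b$" constraint and the "$= b$" constraint. The nestedness and monotonicity of the greedy prefixes resolve it, together with the degenerate case where a component has fewer points than allocated centers.
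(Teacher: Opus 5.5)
Your proof is correct and takes essentially the same route as the paper. Both chain the optimality of $\{\hat b_i\}$ for the $\hat c_i$ functions with the prefix guarantee $\hat c_i(j) \le 2c_i^*(j)$. The one difference is that the paper simply asserts \textsc{BestReps} is equivalent to the version with $\sum_i b_i = b$, while your padding argument, using nestedness of the greedy prefixes and your convention for $\hat b_i + 1 > |P_i|$, proves that step explicitly.
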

 \begin{proof}
 	\textsc{BestReps} is equivalent to minimizing $\sum_{i =1}^t c^*_i(b_i+1)$ subject to $\sum_{i = 1}^t b_i = b$ and $b_i \geq 0$ for every $i \in [t]$. Let $\{b_i^* \colon i \in [t]\}$ denote an optimal solution for this problem. The $2$-approximation guarantee for $\{\hat{b}_i \colon i \in [t]\}$---and the corresponding sets $\{R_i\}$---follows from the fact that $\{\hat{b}_i\}$ are optimal for the $\{\hat{c}_i\}$ functions, and the fact that $c_i^*(j) \leq \hat{c}_i(j) \leq 2c_i^*(j)$ for every $i \in [t]$ and $j \in [b+1]$. These facts yield the following sequence of inequalities:
 	\begin{equation*}
 	\sum_{i = 1}^t c_i^*(\hat{b}_i+1) \leq \sum_{i = 1}^t \hat{c}_i(\hat{b}_i+1) \leq \sum_{i = 1}^t \hat{c}_i(b_i^*+1) \leq 2 \sum_{i = 1}^t c_i^*(b_i^*+1).
 	\end{equation*}
 \end{proof}

\subsection{Algorithm variants and runtime analysis}
We now summarize several different approximation algorithms for MFC (and their runtimes) that are obtained by combining \newalg{} with different strategies for finding $R$. See Appendix~\ref{appendix:runtime} for more details. We assume $t = O(n^\delta)$ for some $\delta \in [0,1)$.
Let \dyn{} denote the algorithm that runs \newalg{} after finding a set of representatives using the dynamic programming strategy from Theorem~\ref{thm:dp}. It has a runtime of $O(n \texttt{Q}_\mathcal{X}(b+t) + tb^2)$. \greed{} is a faster approach that greedily allocates representatives to components iteratively in a way that leads to the best improvement to the objective in Problem~\ref{eq:bestequiv} at each step. It has a runtime of $O(n \texttt{Q}_\mathcal{X}(b+t))$.
\fix{\ell} is a simple baseline that chooses $\ell \geq 1$ representatives per component by running the greedy 2-approximation for $k$-center on each component with $k = \ell$. It has a runtime of $O(n \texttt{Q}_\mathcal{X}(b+t))$, but only applies to budgets $b$ that are multiples of $t$. All three of these algorithms are 2-approximations for MFC (and learning-augmented $2\gamma$-approximations for metric MST) by Theorem~\ref{thm:costbound}. 
% In theory they provide different tradeoffs between runtime and spanning tree quality. 
The asymptotic bottleneck for all three runtimes is computing $\hat{w}$. \fix{\ell} and \greed{} are faster than \dyn{}, but \dyn{} is the only method that satisfies an approximation guarantee for the \textsc{BestReps} step. 
The runtimes above assume that the initial forest is already given. If one factors in the time it takes to compute the initial forest, choosing representatives constitutes an even smaller portion of the runtime.

%% file: arxiv/experiments-v1.tex
\section{Experiments}
\label{sec:experiments}
Prior work has already shown that the MFC framework (which includes both computing an initial forest and running \mfca{}) is fast and finds nearly optimal spanning trees for a wide range of dataset types and metrics. 
% Furthermore, $\gamma$ values tended to be quite small, and the approximations obtained in practice were far below the worst case theoretical approximation bounds. 
% In our experiments, rather than again comparing the MFC framework against an exact $\Omega(n^2)$-time algorithm for metric MST, we focus on an orthogonal set of questions relating to \newalg{}, our new approximation bounds, and strategies for choosing representatives. We also specifically focus on the MFC step
Since our work focuses on improved algorithms for the MFC step, our experiments also focus on this step, rather than on again comparing the entire MFC framework against an exact $\Omega(n^2)$-time algorithm for metric MST. We specifically address the following questions relating to \newalg{}, our approximation bound $\alpha$, and strategies for choosing representatives.

\emph{Question 1:} How does \newalg{} compare (in terms of runtime and spanning tree cost) against the \mfca{} algorithm ($b = 0$) and the $\Omega(n^2)$ algorithm for the MFC step ($b = n$)?

\emph{Question 2:} What is the runtime vs.\ quality tradeoff between using different strategies for approximating the \textsc{BestReps} step in practice (\textsf{\small Dynamic}, \textsf{\small Greedy}, \textsf{\small Fixed}$(\ell)$)?

\emph{Question 3:} How does our instance-specific approximation bound ($\alpha$ in Theorem~\ref{thm:costbound}) compare to the worst-case 2-approximation and the actual approximation achieved in practice?

\textbf{Implementation details and experimental setup.}
Our algorithm implementations are in C++ and directly build on the open-source code made available for \mfca{} in our prior work~\citep{veldt2025approximate}.  We also apply a similar experimental setup.
% , and directly build on the previous open-source code for \mfca{} made  made available at \url{https://github.com/tommy1019/MetricForestCompletion}.
% Since our work builds directly on that of~\cite{veldt2025approximate}, we consider a similar experimental setup. 
We compute initial forests by partitioning $\mathcal{X}$ using a $k$-center algorithm and then finding optimal MSTs for partitions. We choose $t = \sqrt{n}$ partitions since this approximately minimizes the asymptotic runtime for computing a spanning tree (when including the time to form the initial forest); see Appendix~\ref{appendix:runtime} for more details. We consider 4 datasets also used previously~\citep{veldt2025approximate}, chosen since each corresponds to a different dataset type and distance metric. These are: \textit{Cooking} (set data; Jaccard distance), \textit{GreenGenes} (fixed-length sequences; Hamming distance), \textit{FashionMNIST} (784-dimensional points, Euclidean distance), and \textit{Names-US} (strings; Levenshtein edit distance).
See Appendix~\ref{appendix:experiments} for more details on datasets.

% When performing experiments on these large datasets, we typically take uniform random samples of data points for a fixed $n$, and then report average performance. Although our approximation algorithms can scale to much larger sizes of $n$, we also run an exact algorithm for MFC for our comparisons, which can take $\Omega(n^2)$ in the worse case. Therefore, we restrict to values of $n$ for which we can run this step in practice within a reasonable amount of time.

% \textbf{Spanning tree quality vs.\ runtime tradeoff.}
\begin{figure}[t]
	\centering
    
    \begin{subfigure}[b]{0.24\textwidth}
		\centering
		\includegraphics[width=\textwidth]{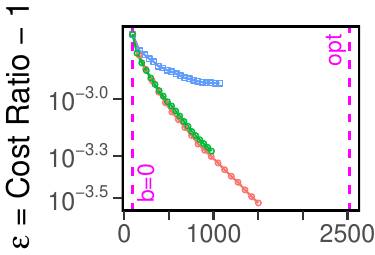}
		\vspace{-10pt}
		\label{fig:1}
	\end{subfigure}\hfill
    \begin{subfigure}[b]{0.24\textwidth}
		\centering
		\includegraphics[width=\textwidth]{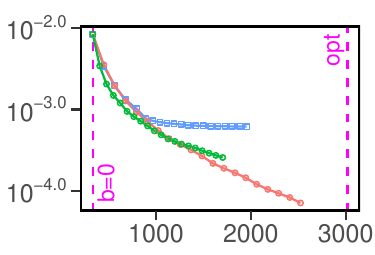}
		\vspace{-10pt}
		\label{fig:2}
	\end{subfigure}\hfill
    \begin{subfigure}[b]{0.24\textwidth}
		\centering
		\includegraphics[width=\textwidth]{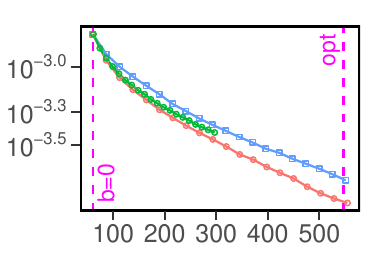}
		\vspace{-10pt}
		\label{fig:3}
	\end{subfigure}\hfill
	\begin{subfigure}[b]{0.24\textwidth}
		\centering
		\includegraphics[width=\textwidth]{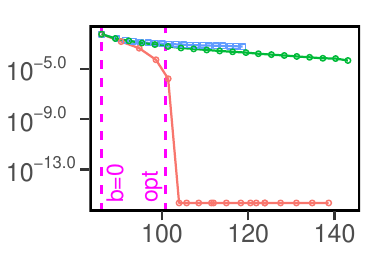}
		\vspace{-10pt}
		\label{fig:4}
	\end{subfigure}
    
    \begin{subfigure}[b]{0.24\textwidth}
		\centering
		\includegraphics[width=\textwidth]{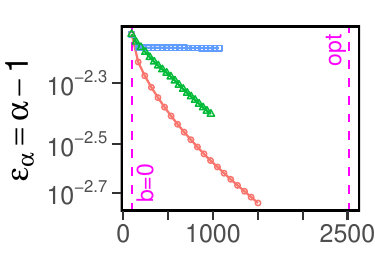}
		\vspace{-10pt}
		\label{fig:5}
	\end{subfigure}\hfill
    \begin{subfigure}[b]{0.24\textwidth}
		\centering
		\includegraphics[width=\textwidth]{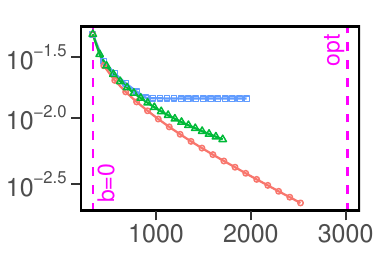}
		\vspace{-10pt}
		\label{fig:6}
	\end{subfigure}\hfill
    \begin{subfigure}[b]{0.24\textwidth}
		\centering
		\includegraphics[width=\textwidth]{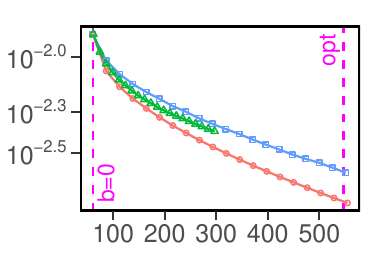}
		\vspace{-10pt}
		\label{fig:7}
	\end{subfigure}\hfill
	\begin{subfigure}[b]{0.24\textwidth}
		\centering
		\includegraphics[width=\textwidth]{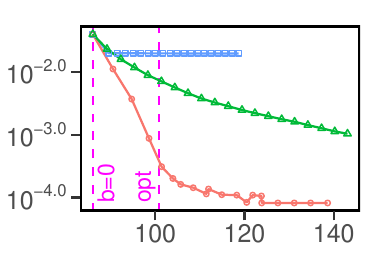}
		\vspace{-10pt}
		\label{fig:8}
	\end{subfigure}

    \begin{subfigure}[b]{0.24\textwidth}
		\centering
		\includegraphics[width=\textwidth]{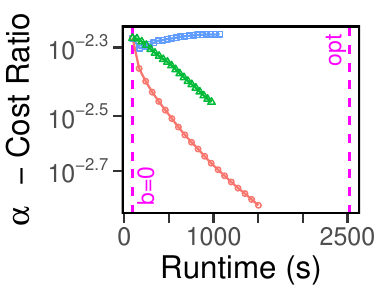}
		\vspace{-10pt}
        \caption{\textit{Cooking} (Jaccard)}
		\label{fig:9}
	\end{subfigure}\hfill
    \begin{subfigure}[b]{0.24\textwidth}
		\centering
		\includegraphics[width=\textwidth]{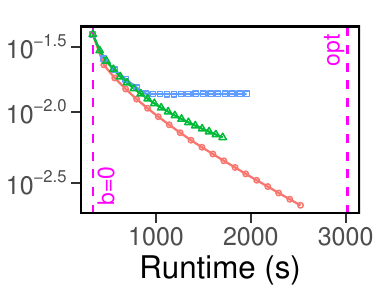}
		\vspace{-10pt}
        \caption{\textit{GrGenes} (Hamming)}
		\label{fig:10}
	\end{subfigure}\hfill
    \begin{subfigure}[b]{0.24\textwidth}
		\centering
		\includegraphics[width=\textwidth]{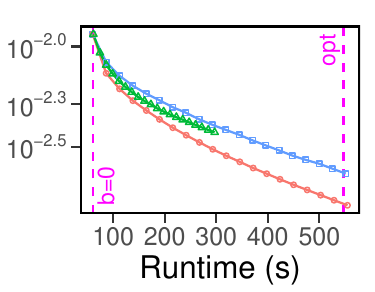}
		\vspace{-10pt}
        \caption{\textit{Fashion} (Euclidean)}
		\label{fig:11}
	\end{subfigure}\hfill
	\begin{subfigure}[b]{0.24\textwidth}
		\centering
		\includegraphics[width=\textwidth]{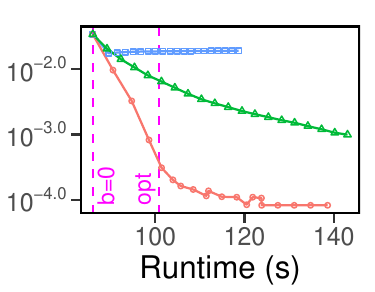}
		\vspace{-10pt}
        \caption{\textit{Names} (Levenshtein)}
		\label{fig:12}
	\end{subfigure}

    $\begin{alignedat}{10}
                \textcolor[RGB]{239, 94, 89}{\bullet} \text{~Dynamic~}        & ~~~ &
                \textcolor[RGB]{41, 175, 33}{\blacktriangle} \text{~Fixed~}  & ~~~ &
                \textcolor[RGB]{81, 137, 255}{\blacksquare} \text{~Greedy~}  
            \end{alignedat}$ \\
    
	\caption{We display the performance of each variant of \newalg{} as runtime increases. Each point corresponds to running one method with a fixed budget $b$. The top row shows the value of $\varepsilon$ such that a method obtains a $(1+\varepsilon)$-approximation in practice. The second row shows the value $\varepsilon_\alpha$ such that we can guarantee a $(1+\varepsilon_\alpha)$-approximation using Theorem~\ref{thm:costbound}. Computing $\varepsilon_\alpha$ is fast. Computing $\varepsilon$ is impractical as it requires optimally solving MFC. The last row shows the gap between $\alpha$ and the true approximation as runtime increases. We see that all variants of \newalg{} provide a useful interpolation between the existing \mfca{} algorithm ($b = 0$ vertical dashed line) and an optimal MFC algorithm (right vertical dashed line). All plots also show that dynamic programming produces better true approximations (top row), much better approximation bounds (middle row), and is faster at shrinking the gap between the bound and true approximation (last row). 
    %For the Cooking dataset, 16 random orderings of the entire dataset ($n = 39,774$) were used with average results displayed. For all other plots, we have taken 16 uniform random samples of size $n = 30,000$, and display average results.
    For \emph{Cooking}, 16 random orderings of the entire dataset ($n = 39,774$) were used, for all others we take 16 uniform random samples of size $n = 30,000$. 
    Average results are then displayed.
    }
    \vspace{-10pt}
	\label{fig:mainexps}
\end{figure}

% \textbf{Experimental results.} 
To address our three questions, we run \dyn{}, \greed{}, and \fix{\ell} for a range of budgets $b$. For our comparisons, we also run \mfcopt{}: an optimal algorithm for MFC that finds an MST of the coarsened graph with respect to the optimal weight function $w^*$. 
% This is expensive, but provides a helpful baseline as an extreme case that is slow but produces an optimal solution.
Each run of each algorithm produces a spanning tree that completes the initial forest. To measure spanning tree quality, we compute the \emph{Cost Ratio} for MFC: the weight of the spanning tree produced by the algorithm divided by the weight of the tree produced by \mfcopt{}. 
% In our plots we report the value of $\textit{Cost Ratio} - 1$ for each algorithm, so that a value $\varepsilon = \textit{Cost Ratio} - 1$ means the algorithm achieved a $1+\varepsilon$ approximation.
We also compute $\alpha$ from Theorem~\ref{thm:costbound}, which is an upper bound for \emph{Cost Ratio}. This bound $\alpha$ differs for each algorithm and choice of $b$, since it depends on how well the \textsc{BestReps} step is solved. 

The first row of Figure~\ref{fig:mainexps} displays $\textit{Cost Ratio} - 1$ versus runtime for each algorithm. Note that if $\varepsilon = \textit{Cost Ratio} - 1$, this means the algorithm achieved a $(1+\varepsilon)$-approximation. For the $x$-axis, the runtime includes the time for approximating \textsc{BestReps} plus the time for \newalg{}. The second row of plots displays $\varepsilon_\alpha = \alpha - 1$ in the $y$-axis. This shows us the value of $\varepsilon_\alpha$ for which we can \emph{guarantee} an algorithm has achieved {at least} a $(1+\varepsilon_\alpha)$-approximation, by Theorem~\ref{thm:costbound}. The third row of plots displays $\alpha - \textit{Cost Ratio}$, which is the gap between our bound on the approximation factor and the true approximation factor. In Appendix~\ref{appendix:experiments}, we show results for all these metrics as the budget $b$ varies. However, this does not provide as direct of a comparison, since the runtime for each method depends differently on $b$. 
% Note that for all plots, we do not focus on comparing spanning tree quality directly against a fixed budget $b$, simply because the running time for each algorithm varies depending on $b$. 
Here in the main text, we primarily focus on understanding how well each method performs within a fixed runtime budget (rather than fixed $b$).

% In Appendix~\ref{appendix:experiments} we display results for the spanning tree quality versus budget $b$. However, because the runtime of each algorithm depends on $b$ in different ways, this does not provide as direct a comparison. Here in the main text, we focus on understanding how good of a spanning tree each algorithm produces within a specific time budget. Below we outline key takeaway messages addressing our main questions.

% \textbf{Key findings and takeaway messages.} We summarize several highlights from Figure~\ref{fig:mainexps}.

\textbf{Comparing against \mfca{} and \mfcopt{} (Question 1).} When $b = 0$ (leftmost point in each plot), all \newalg{} algorithms correspond to the previous \mfca{} algorithm. The output for each algorithm traces out a performance curve as $b$ (and runtime) increases. These curves tend to decrease steeply at the beginning, showing that \newalg{} produces noticeably better spanning trees than \mfca{} with only a small amount of extra work. In many cases, the spanning tree quality gets very close to an optimal solution at a fraction of the time it takes to run \mfcopt{}. One outlier in these results is the Names-US dataset, where \mfcopt{} is much faster than usual. This is because initial forests for Names-US are highly imbalanced, with one large component containing nearly all the points. For highly-imbalanced forests, it is much cheaper to optimally solve the MFC step. Running \newalg{} is therefore not useful for large values of $b$. Nevertheless, for small values of $b$, \newalg{} provides a meaningful interpolation between \mfca{} and \mfcopt{}.

\textbf{Comparing methods for \textsc{BestReps} (Question 2).} From the top row of Figure~\ref{fig:mainexps}, we see that \dyn{} tends to produce the best spanning trees within a fixed time budget. Perhaps surprisingly, the simplest method \fix{\ell} tends to outperform \greed{}, whose progress tends to plateau after a certain point. This may be because \greed{} is too myopic in assigning representatives. For example, it is possible that adding one extra representative to a certain component would change the objective very little, but adding two or more would significantly decrease the objective. \fix{\ell} would be able to achieve this benefit for the right choice of $\ell$, whereas \greed{} may never notice the benefit. 

\textbf{Comparing $\alpha$ values (Question 3).}
Our bound $\alpha$ (second row of plots in Figure~\ref{fig:mainexps}) is always very close to 1, and provides a much better bound to the true approximation ratio than the worst-case 2-approximation. This can be seen in the third row of plots in Figure~\ref{fig:mainexps}. This is significant since computing the true \emph{Cost Ratio} is impractical, as it requires optimally solving MFC. However, $\alpha$ can be computed easily in the process of running \newalg{}, and therefore serves as a very good proxy for the true approximation ratio with virtually no extra effort. As a practical benefit, this opens up the possibility of choosing $b$ dynamically in practice. In particular, one can choose to add representatives until achieving a satisfactory value of $\alpha$, and only then run \newalg{}. 

Figure~\ref{fig:mainexps} also shows that different approaches for \textsc{BestReps} perform differently in terms of how well they minimize $\alpha$. While \dyn{} is slightly better than other methods in terms of \emph{Cost Ratio}, it is far better at minimizing $\alpha$. Again, this is significant because $\alpha$ is an approximation guarantee that we can efficiently obtain in practice, unlike the true \emph{Cost Ratio}. Furthermore, as runtime increases, the gap between $\alpha$ and \emph{Cost Ratio} shrinks more quickly for \dyn{} than for other methods (Figure~\ref{fig:mainexps}, third row). This provides further evidence for the benefits of the dynamic programming approach for \textsc{BestReps}, which helps further address Question 2.

%% file: arxiv/appendix-dp.tex
\section{Dynamic Programming for Representative Allocation}
\label{appendix:dp}
\newcommand{\nn}{\mathbb{N}}

Consider a resource allocation problem of the form
\begin{align}
	\label{eq:resall}
	\minimize \quad & \sum_{i = 1}^t f_i(\textbf{b}[i]) \\
	\text{subject to }\quad &   \sum_{i = 1}^t \textbf{b}[i] = b \\
	& \vb \in \mathbb{N}^t
\end{align}
where we assume the functions $\{f_i \colon i \in [t]\}$ are given and we use the convention that natural numbers include zeros: $\mathbb{N} = \{0,1,2, \hdots\}$. This matches Problem~\eqref{eq:bestequiv} in the main text after applying a change of function $f_i(j) = \hat{c}_i(j + 1)$ to better highlight that the goal is to assign \emph{extra} representatives, beyond the first representative for each component. Theorem~\ref{thm:dp} guarantees that if we can solve this problem, we can use it to obtain a 2-approximation for \textsc{BestReps}. 

%We show how to solve it using dynamic programming. 
%\textbf{Theorem~\ref{thm:dp}.}
%	Let $\{\hat{b}_i \colon i \in [t]\}$ be the optimal solution to Problem~\eqref{eq:bestequiv}. For $i \in [t]$, define ${R}_i$ to be the first $\hat{b}_i +1$ cluster centers chosen by running the greedy $2$-approximation for $k$-center on $P_i$.  Then $\{R_i \colon i \in [t]\}$ is a $2$-approximate solution for \textsc{BestReps}.
%\begin{proof}
%	\textsc{BestReps} is equivalent to minimizing $\sum_{i =1}^t c^*_i(b_i+1)$ subject to $\sum_{i = 1}^t b_i = b$ and $b_i \geq 0$ for every $i \in [t]$. Let $\{b_i^* \colon i \in [t]\}$ denote an optimal solution for this problem. The $2$-approximation guarantee for $\{\hat{b}_i \colon i \in [t]\}$---and the corresponding sets $\{R_i\}$---follows from the fact that $\{\hat{b}_i\}$ are optimal for the $\{\hat{c}_i\}$ functions, and the fact that $c_i^*(j) \leq \hat{c}_i(j) \leq 2c_i^*(j)$ for every $i \in [t]$ and $j \in [b+1]$:
%%	that $c^*_i(j) \leq \hat{c}_i(j) \leq 2c^*_i(j)$ for $i \in [t]$ and $j \in [b+1]$. In particular:
%	\begin{equation*}
%		\textstyle \sum_{i = 1}^t c_i^*(\hat{b}_i+1) \leq \sum_{i = 1}^t \hat{c}_i(\hat{b}_i+1) \leq \sum_{i = 1}^t \hat{c}_i(b_i^*+1) \leq 2 \sum_{i = 1}^t c_i^*(b_i^*+1).
%	\end{equation*}
%\end{proof}

We can solve the problem in~\eqref{eq:resall} using dynamic programming. For integers $B \in [0, b]$ and $T \in [t]$, define $\Omega_{T}^B = \{ \vb \in \nn^T \colon \sum_{i = 1}^T \vb[i] = B\}$ and define
\begin{equation*}
	F(T,B) = \min_{ \vb \in \Omega_{T}^{B}  } \quad \sum_{i =1 }^{T} f_i(\vb[i]).
\end{equation*}
Our goal then is to efficiently compute $F(t,b)$. 

In the context of the \textsc{BestReps} problem, $F(T,B)$ is the optimal way to assign $B$ extra representatives to the first $T$ components. If there are no extra representatives to assign, we can see that
\begin{equation*}
	F(T,0) = \sum_{i = 1}^T f_i(0) \quad \text{ for } T \in [t].
\end{equation*}
If there is only one component to assign extra representatives to, then we have
\begin{equation*}
	F(1,B) = f_1(B) \quad \text{ for } B = 0,1, \hdots, b.
\end{equation*}

Observe next that the optimal way to allocate $B$ representatives across the first $T$ components is found by considering all ways to optimally allocate $k$ representatives to the first $T-1$ components, while allocating $B - k$ representatives to the $T$th component. This is captured by the formula:
\begin{equation*}
	F(T,B) = \min_{0 \leq k \leq B}  F(T-1,k) + f_T(B-k).
\end{equation*}
Using a bottom-up dynamic programming algorithm, computing $F(T,B)$ when given $F(T-1,k)$ and $f_T(B-k)$ for every $k \in [0,B]$ takes $O(B) = O(b)$ time for every $T \leq t$. Since we need to compute $F(T,B)$ for $t$ choices of $T$ and $b$ choices of $B$, the overall runtime is $O(tb^2)$.

In practice, we need to know not just the value of $F(t,b)$ but the choice of $\vb \in \nn^t$ that produces the optimal solution, since this determines the number of representatives for each component. We can accomplish this naively by storing a vector of length-$t$ for each choice of $F(T,B)$, leading to a memory requirement of $O(t^2b)$. In practice, we reduce this to $O(tb)$ by noting that $F(T,B)$ only depends on $F(T-1,k)$ for $k \in [0,B]$. Thus, as long as we save the length-$t$ vectors associated with $F(T-1,k)$ for $k \in [0,B]$, we can discard all length-$t$ vectors associated with $F(J,k)$ for $J < T-1$ and $k \in [0,B]$.

\textbf{Details on the use of LLMs for the \textsc{BestReps} results.} Given the similarity between \textsc{BestReps} and the classical $k$-center problem, we prompted an LLM to help check whether this problem had been previously studied. The LLM noted several other variants of $k$-center and similar resource allocation problems, but was unable to find prior examples where this exact problem had been studied. The LLM then suggested a greedy algorithm that it claimed was a 2-approximation algorithm for this problem, but the approximation analysis it provided was incorrect. With additional prompting, the LLM suggested a dynamic programming approach. Although the LLM's dynamic programming algorithm and its proof still contained minor errors, the general strategy matched the basic approach we ultimately used to prove a 2-approximation. This is a natural strategy for an LLM to suggest, given that the dynamic programming approach is standard for variants of the knapsack problem and resource allocation problems in this form. See, for example, the work of~\cite{marsten1978hybrid}, which effectively covers the same strategy. LLMs were not used in research ideation for any other aspects of the paper, and in particular were not used for any of the design or analysis in Section~\ref{sec:approx}. LLMs were also not used to aid in the final write-up of results for \textsc{BestReps}, or for the write-up of any other section of the paper.

%% file: arxiv/appendix-runtime.tex
\section{Details for algorithm variants and runtimes}
\label{appendix:runtime}
There are several nuances to consider when reporting runtimes for our algorithm variants \dyn{}, \greed{}, and \fix{\ell}. These all have different runtimes when addressing the \textsc{BestReps} problem, but in many cases those runtimes are overshadowed by the \newalg{} step that follows when approximating MFC. The relative difference between these algorithms can be further obscured if one also considers the time it takes to compute the initial forest. Although computing an initial forest is not part of the MFC problem, it is an important consideration if the ultimate goal is to approximate the original metric MST problem. 

We provide a careful runtime comparison for each algorithm here, along with some considerations regarding the time to compute the initial forest. Let $\texttt{Q}_\mathcal{X}$ denote the time for one distance query in $\mathcal{X}$. We assume the number of components in the initial forest is $t = O(n^\delta)$ for some $\delta \in [0,1)$, since the MFC framework only leads to subquadratic time algorithms if $t$ is dominated by the number of points. We use $\tilde{O}$ to hide logarithmic factors in $n$. 

\textbf{Runtimes for \textsc{BestReps} step.} In order to find a set of representatives $R$ of size $b+t$ using the dynamic programming algorithm or the greedy method, we first must compute $\hat{c}_i(j)$ for $i \in [t]$ and $j \in [b+1]$. To do so, we run the standard greedy $2$-approximation for $k$-center on $P_i$ for each $i\in [t]$ with $k = b+1$. This means $k|P_i|$ distance queries for $i \in [t]$, so this step takes $O((b+1)n\texttt{Q}_\mathcal{X})$ time after summing over all $i \in [t]$. 

\textit{Dynamic programming.}
The dynamic programming step takes an additional $O(tb^2)$ time to allocate representatives to components. The total time for approximating \textsc{BestReps} via dynamic programming is therefore $O(n(b+1)\texttt{Q}_\mathcal{X} + tb^2)$. 

\textit{Greedy representative allocation.}
The greedy algorithm for \textsc{BestReps} starts with $(b_1, b_2, \dots, b_t) = (0,0,\dots, 0)$. It then simply iterates through the number of additional representatives from $1$ to $b$, and at each step adds 1 to the $b_i$ value that leads to the largest decrease in the objective $\sum_{i = 1}^t \hat{c}_i(b_i + 1)$. More concretely, the algorithm must maintain the value of
\begin{equation*}
    \Delta_i = \hat{c}_i(b_i + 1) - \hat{c}_i(b_i + 2) 
\end{equation*}
for each $i \in [t]$, and choose the component with maximum $\Delta_i$ at each step. Observe that $\Delta_i \geq 0$ for each $i \in [t]$, since $\hat{c}_i$ is a decreasing cost function.
A simple $O(tb)$-time implementation is to store $\Delta_i$ values in an array and iterate through all $t$ values to find the maximum at each step. This can be improved to $O(t + b \log t)$ time using a heap. However, in either case this step is dominated by the time to compute the $\hat{c}_i$ functions. So the runtime for the greedy algorithm for \textsc{BestReps} is $O(n(b+1)\texttt{Q}_\mathcal{X}$). 

\textit{\fix{\ell} baseline.} In order to choose $\ell = b/t$ representatives per component (assuming $b$ is a multiple of $t$), we simply run the greedy $k$-center 2-approximation on each component with $k = \ell$. We then use the resulting centers as the representatives. This has a faster runtime for \textsc{BestReps} of $O(\ell n \texttt{Q}_\mathcal{X})$. 

\textbf{Runtime for \newalg{} step.} Fix a set $\{R_i: i \in [t]\}$ of nonempty representative sets for the components.
Let $R = \bigcup_{i = 1}^t R_i$ where $|R| = b+t$. Given this input, \newalg{} first computes the distance between each $r \in R_i$ and every $x \in \mathcal{X} \setminus P_i$. This is a total of
\begin{equation*}
    \sum_{i = 1}^t |R_i|\cdot (|\mathcal{X}| - |P_i|) < (b+t) n
\end{equation*}
distance queries, needed to define the weight function $\hat{w}$ for the coarsened graph. It then takes $O(t^2 \log t)$ time to find the MST of the coarsened graph with respect to $\hat{w}$ using Kruskal's algorithm. One could implement the latter step more quickly using alternate MST techniques, but our implementations apply Kruskal's since this is simple and is not the bottleneck of \newalg{}, neither in theory nor practice. Overall, running \newalg{} with a fixed $R$ takes $O(n(b+t)\texttt{Q}_\mathcal{X})$.

\textbf{Full runtime analysis for MFC approximation algorithms.}
Although \greed{} and \fix{\ell} differ in terms of their runtime for the \textsc{BestReps} step, both of these algorithms are asymptotically dominated by \newalg{}. Therefore, as approximation algorithms for MFC (i.e., ignoring initial forest computation time) their runtime is $O(n(b+t)\texttt{Q}_\mathcal{X})$. \dyn{}, on the other hand, has a runtime of $O(n(b+t)\texttt{Q}_\mathcal{X} + tb^2)$ to account for the more accurate (but slower) dynamic programming strategy for allocating representatives. In cases where $b$ is small enough ($b = O(n/t)$), this term is negligible asymptotically. Hence, when we have a small budget for additional representatives, we would expect \dyn{} to improve over \mfca{} and \greed{} (note that \fix{\ell} is not defined in this case). As $b$ increases beyond this limit, we still expect \dyn{} to produce better spanning trees than its competitors when we consider a fixed $b$, but the runtimes are then not directly comparable.

\textbf{Considerations for computing the initial forest.}
The runtime for computing an initial forest can vary significantly depending on the strategy used. In our prior work~\citep{veldt2025approximate} we computed initial forests by first running the greedy $2$-approximation for $k$-center with $k = t$ to partition $\mathcal{X}$, and then applying Kruskal's algorithm to find optimal MSTs of the components. In cases where the $k$-center step produces balanced clusters (which it often did in practice but is not guaranteed to), this runs in $\tilde{O}(nt\texttt{Q}_\mathcal{X} + n^2/t)$-time. In practice, we found that this was typically much faster than running an $\Omega(n^2)$-time exact algorithm for MST, but was also usually the bottleneck for the MFC framework. In particular, for most values of $t$, computing an initial forest was slower than running \mfca{}. 
Whether or not computing the initial forest is the most expensive step for finding an approximate spanning tree, this will have an impact on the comparison between \dyn{}, \greed{}, and \fix{\ell}. Especially in cases where computing an initial forest step is expensive, the runtime differences between these algorithms for the MFC will be less important. 
% This suggests that applying the best algorithm for choosing representatives (\dyn{}) will be worthwhile even if it is slower for this step.

\textbf{Motivation for $t = \sqrt{n}$.}
% {\color{red} After finalizing experiments, ensure this still remains true.}
In our numerical experiments, we set $t=\sqrt{n}$ since this roughly minimizes the asymptotic runtime when factoring in the initial forest computation. In more detail, consider simplified conditions where the initial $k$-center step produces balanced partitions, $b = O(t)$, and $\texttt{Q}_\mathcal{X} = O(\log n)$. If these conditions hold, the runtime for finding an initial forest and running \dyn{} is $\tilde{O}(nt + t^3 + n^2/t)$. This is minimized when $t = \Theta(\sqrt{n})$. If we instead used \greed{} or \fix{\ell}, the runtime would be $\tilde{O}(nt + n^2/t)$ under these conditions, which is still minimized by $t = \Theta(\sqrt{n})$. Even if these conditions do not all perfectly hold, we expect $t = \Theta(\sqrt{n})$ to at least approximately minimize the runtime. 
We remark finally that there are other existing heuristics for computing an approximate MST that also rely on partitioning a dataset into $t$ components and then connecting disjoint components. Using a similar arguments, these methods select $t = \sqrt{n}$ in order to minimize the overall runtime~\citep{jothi2018fast,zhong2015fast}. 

\textbf{Comparison with existing EMST algorithms.}
Although our main focus is to design spanning tree algorithms that work for arbitrary metric spaces, it is also useful to consider the guarantees of \newalg{} against existing techniques that work for the more restrictive Euclidean minimum spanning tree problem (EMST). For this comparison, we consider $d$-dimensional vectors where $d$ is a constant but potentially very large. Using the simplified analysis above, the runtime of \newalg{} scales roughly as $n^{1.5}$ by choosing $t = \Theta(\sqrt{n})$.

As noted above, there are several existing heuristics for EMST that partition the data into $\sqrt{n}$ components and then add additional edges to connect components before refining into an overall spanning tree~\citep{jothi2018fast,zhong2015fast}. \newalg{} has a comparable asymptotic runtime to these methods, while also having the advantage of satisfying concrete theoretical guarantees. 

~\cite{agarwal1990euclidean} showed that an optimal EMST can be computed in $O(n^{2 - \frac{2}{\ceil{d/2} + 1} + \varepsilon})$ time. A simple calculation shows that this is slower than $O(n^{1.5})$ whenever $d \geq 6$. Thus, although it provides optimal solutions, it is not expected to be nearly as scalable as \newalg{} for even for modest values of $d$.~\cite{arya2016fast} presented a method for computing a $(1+\varepsilon)$-approximate EMST in time $O(n \log n + \varepsilon^{-2} \log^2 \frac{1}{\varepsilon})$ if $d$ is a constant. However, this runtime hides exponential factors in $d$ of the form $O(1)^d$, which likely be a challenge for practical applications in situations where $d$ is large, e.g., the FashionMNIST dataset where $d = 784$. In contrast, \newalg{} has no exponential dependence on $d$, and has no problems being run on FashionMNIST and other high-dimensional Euclidean datasets. 

There are many other theoretical and practical algorithms for EMSTs that have been designed under diverse computational models and assumptions~\citep{chen2023streamingemst,chen2022new,march2010fast,wang2021fast,jayaram2024massively}. The size of the literature complicates a comprehensive comparison. We expect that the best practical algorithms for EMSTs will outperform \newalg{} on Euclidean data, since they are specialized to this setting and heavily leverage the assumption that the data is Euclidean. Nevertheless, \newalg{} also provides a scalable approach that is also very simple to implement. Its most important feature is that it applies to arbitrary metrics.

%% file: arxiv/appendix-experiments.tex
\section{Additional Experimental Details}
\label{appendix:experiments}
Our experiments are run on a large research server with 1TB of RAM with two 32-Core AMD processors. We use a subset of the datasets considered in ou prior work~\citep{veldt2025approximate}. Specifically, we selected four datasets that correspond to different distance metrics. We summarize the datasets below, along with a link to the original data source(s).
See our prior work~\citep{veldt2025approximate} for additional steps in preprocessing data from the original source.
\begin{itemize}
    \item \textit{Cooking (Jaccard distance)}~\citep{kaggle2015cooking,amburg2020clustering}. Sets of food ingredients that define recipes. There are $n = 39,774$ recipes and 6714 ingredients.
    % \item \textit{MovieLens (Jaccard distance)}~\citep{annBenchmarks,movie_lens}. Set data derived from movie ratings. There are $n \approx 30,000$ sets, each with at least 64 items.
    % \item \textit{Kosarak (Jaccard distance)}~\citep{kosarak}. Set data derived from click-stream news dataset. We restrict to a subset of 32,295 sets that all have at least 40 items.
    \item \textit{FashinMNIST (Euclidean distance)}~\citep{fashion_mnist}. Vectors of size $d = 784$ representing flattened images of size $28 \times 28$ pixels, where each image is a picture of a clothing item.
    \item \textit{Names-US (Levenshtein edit distance)}~\citep{nameDataset2021}. Each data point is a string representing a last name of someone in the United States. The average name length is 6.67. 
    % \item \textit{GreenGenes-unaligned (Levenshtein edit distance)}~\citep{greengenesDataset}. Each data point is a chimera-checked 16S megagenomic sequence. Sequences have variable length. 
    \item \textit{GreenGenes-aligned (Hamming distance)}~\citep{greengenesDataset}. An alternate form for the GreenGenes dataset where sequences have been aligned so that each is represented by a fixed length sequence of 7682 characters.
\end{itemize}
% The latter four datasets are too large to run an exact algorithm  computation, and for this reason,~\cite{veldt2025approximate} restricted to considering uniform subsets of size $n = 30,000$.
When performing experiments for the \textit{Cooking} dataset, we repeatedly take uniform random orderings of data points. 
Because our algorithms are deterministic, this random ordering effects only the arbitrary elements selected for the first center during the $k$-center step of the initial forest computation, and the first representative chosen for each component when approximating the \textsc{BestReps} problem.
For all other datasets, we take uniform random samples of $n = 30,000$ data points. In all cases, we choose $t = \floor{\sqrt{n}}$, which equals 199 for \textit{Cooking} and 173 for other datasets. In all our plots, we report the average performance over 16 different random samples. Although our approximation algorithms can scale to much larger sizes of $n$, we also run an exact algorithm for MFC for our comparisons, which can take $\Omega(n^2)$ in the worse case. We restrict to values of $n$ for which we can find an optimal solution (and run all of our approximation algorithms for many different choices of $b$) within a reasonable amount of time.

\textbf{Budget choices and runtime differences.}
For each variant of \newalg{}, we used budgets $b$ ranging from $0$ to $38t$ in increments of $2t$. This means that for the largest budget, we considered having an average of 39 representatives per component. For \fix{\ell}, this corresponds to $\ell$ values from $1$ to 39 in increments of 2.
% We ran \fix{\ell} with $\ell$ ranging from 1 to 40 in increments of 2.
% a budget multiplier of 1 to 40 with step size of 2. The budget is then determined as T * mult - T. We run \fix{\ell} ranging from 1 to 40 with step size of 2
% For \dyn{} and \greed{}, we choose budget values from 0 up to 40 with a step size of 2. We run $\fix{\ell}$ with $\ell$ ranging from 0 to 10 in step sizes of 4. 

For a fixed budget $b$, \dyn{} tends to take longer than \greed{}, but for a somewhat surprising reason. Although the time to find representatives is slower for \dyn{}, this constitutes only a small fraction of the total runtime and is not the primary reason why \dyn{} is slower for a fixed $b$. By checking runtime for different steps, we found that the increase in runtime for \dyn{} is due primarily to the total time it takes to compute the distances between representatives and points outside each representative's component, when running $\newalg{}$. For a fixed $R$, recall the number of distances computed by \newalg{}$(R)$ to find $\hat{w}$ is 
\begin{equation*}
    \sum_{i = 1}^t |R_i|\cdot (|\mathcal{X}| - |P_i|).
\end{equation*}
Although this is bounded above by $(b+1)n$, in practice this value will depend on which components the representatives are assigned to. In particular, if a representative is contained in a component $P_i$ that is very large, then $(|\mathcal{X}| - |P_i|)$ may be significantly smaller than the bound $|\mathcal{X}|$. We found that the number of distances computed by \dyn{} tends be much larger than the number of distances computed by \greed{}. This suggests that \greed{} may have a greater tendency to place representatives in large components, while \dyn{} finds ways to distribute representatives differently in a way that leads to better spanning trees but more distance computations. 

\paragraph{Additional experimental results.} In Figure~\ref{fig:bexps}, we display results for $\textit{Cost Ratio} - 1$, $\alpha - 1$, and the gap between these two values as $b$ varies. These plots show the same basic trends as Figure~\ref{fig:mainexps}, and the curves are more clearly aligned since there is exactly one point per value of  $b$. We also display runtimes, as these differ for each method for each fixed choice of $b$.

\begin{figure}[t]
	\centering
    
    \begin{subfigure}[b]{0.24\textwidth}
		\centering
		\includegraphics[width=\textwidth]{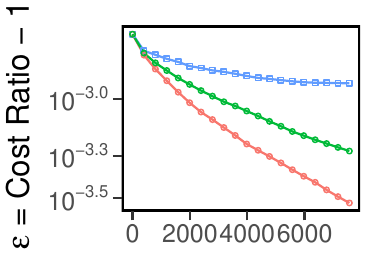}
		\vspace{-10pt}
		\label{fig:b_1}
	\end{subfigure}\hfill
    \begin{subfigure}[b]{0.24\textwidth}
		\centering
		\includegraphics[width=\textwidth]{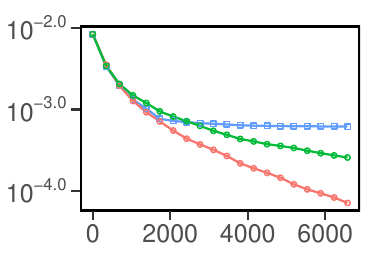}
		\vspace{-10pt}
		\label{fig:b_2}
	\end{subfigure}\hfill
    \begin{subfigure}[b]{0.24\textwidth}
		\centering
		\includegraphics[width=\textwidth]{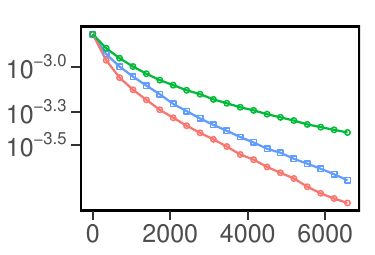}
		\vspace{-10pt}
		\label{fig:b_3}
	\end{subfigure}\hfill
	\begin{subfigure}[b]{0.24\textwidth}
		\centering
		\includegraphics[width=\textwidth]{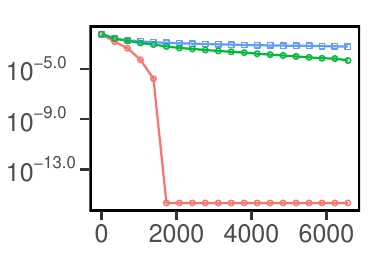}
		\vspace{-10pt}
		\label{fig:b_4}
	\end{subfigure}
    
    \begin{subfigure}[b]{0.24\textwidth}
		\centering
		\includegraphics[width=\textwidth]{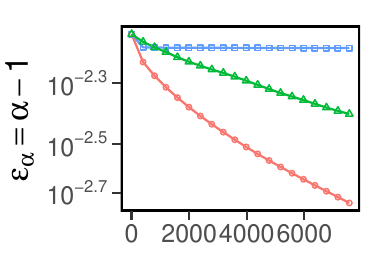}
		\vspace{-10pt}
		\label{fig:b_5}
	\end{subfigure}\hfill
    \begin{subfigure}[b]{0.24\textwidth}
		\centering
		\includegraphics[width=\textwidth]{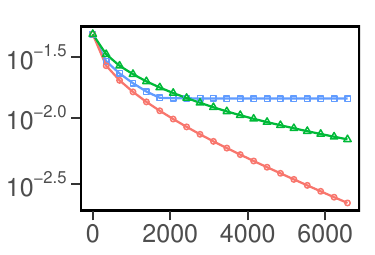}
		\vspace{-10pt}
		\label{fig:b_6}
	\end{subfigure}\hfill
    \begin{subfigure}[b]{0.24\textwidth}
		\centering
		\includegraphics[width=\textwidth]{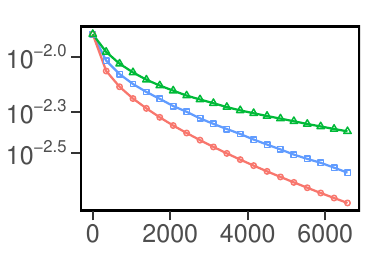}
		\vspace{-10pt}
		\label{fig:b_7}
	\end{subfigure}\hfill
	\begin{subfigure}[b]{0.24\textwidth}
		\centering
		\includegraphics[width=\textwidth]{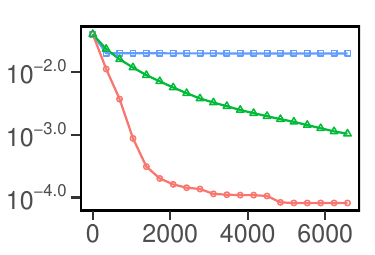}
		\vspace{-10pt}
		\label{fig:b_8}
	\end{subfigure}

% \begin{subfigure}[b]{0.24\textwidth}
%     \centering
% 		\includegraphics[width=\textwidth]{figures/newplots/jaccard_cooking_0_alpha_x_lim.pdf}
% 		\vspace{-10pt}
%         % \caption{Cooking (Jaccard)}
% 		\label{fig:b_9}
% 	\end{subfigure}\hfill
%     \begin{subfigure}[b]{0.24\textwidth}
% 		\centering
% 		\includegraphics[width=\textwidth]{figures/newplots/hamming_gg_alpha_x_lim.pdf}
% 		\vspace{-10pt}
%         % \caption{GrGenes (Hamming)}
% 		\label{fig:b_10}
% 	\end{subfigure}\hfill
%     \begin{subfigure}[b]{0.24\textwidth}
% 		\centering
% 		\includegraphics[width=\textwidth]{figures/newplots/hdf5_fashion_alpha_x_lim.pdf}
% 		\vspace{-10pt}
%         % \caption{Fashion (Euclidean)}
% 		\label{fig:b_11}
% 	\end{subfigure}\hfill
% 	\begin{subfigure}[b]{0.24\textwidth}
% 		\centering
% 		\includegraphics[width=\textwidth]{figures/newplots/edit_distance_names_us_alpha_x_lim.pdf}
% 		\vspace{-10pt}
%         % \caption{Names (Levenshtein)}
% 		\label{fig:b_12}
% 	\end{subfigure}

    \begin{subfigure}[b]{0.24\textwidth}
		\centering
		\includegraphics[width=\textwidth]{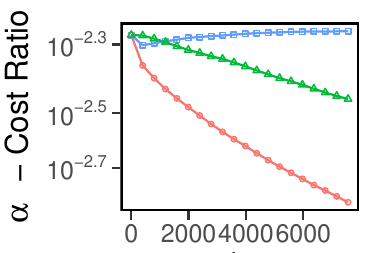}
		\vspace{-10pt}
        % \caption{Cooking (Jaccard)}
		\label{fig:b_9}
	\end{subfigure}\hfill
    \begin{subfigure}[b]{0.24\textwidth}
		\centering
		\includegraphics[width=\textwidth]{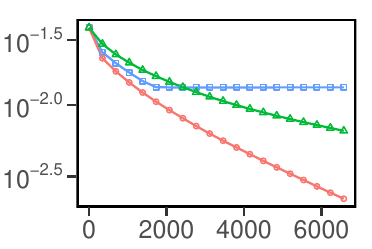}
		\vspace{-10pt}
        % \caption{GrGenes (Hamming)}
		\label{fig:b_10}
	\end{subfigure}\hfill
    \begin{subfigure}[b]{0.24\textwidth}
		\centering
		\includegraphics[width=\textwidth]{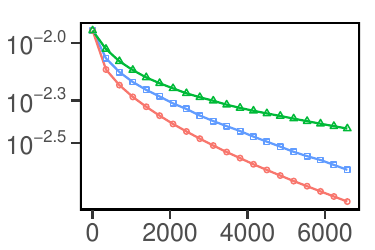}
		\vspace{-10pt}
        % \caption{Fashion (Euclidean)}
		\label{fig:b_11}
	\end{subfigure}\hfill
	\begin{subfigure}[b]{0.24\textwidth}
		\centering
		\includegraphics[width=\textwidth]{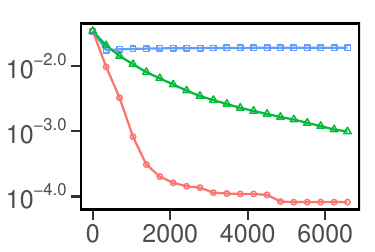}
		\vspace{-10pt}
        % \caption{Names (Levenshtein)}
		\label{fig:b_12}
	\end{subfigure}

\begin{subfigure}[b]{0.24\textwidth}
		\centering
		\includegraphics[width=\textwidth]{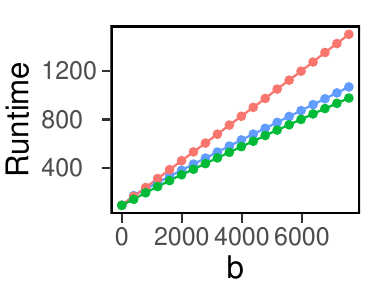}
		\vspace{-10pt}
        \caption{Cooking (Jaccard)}
		\label{fig:b_9}
	\end{subfigure}\hfill
    \begin{subfigure}[b]{0.24\textwidth}
		\centering
		\includegraphics[width=\textwidth]{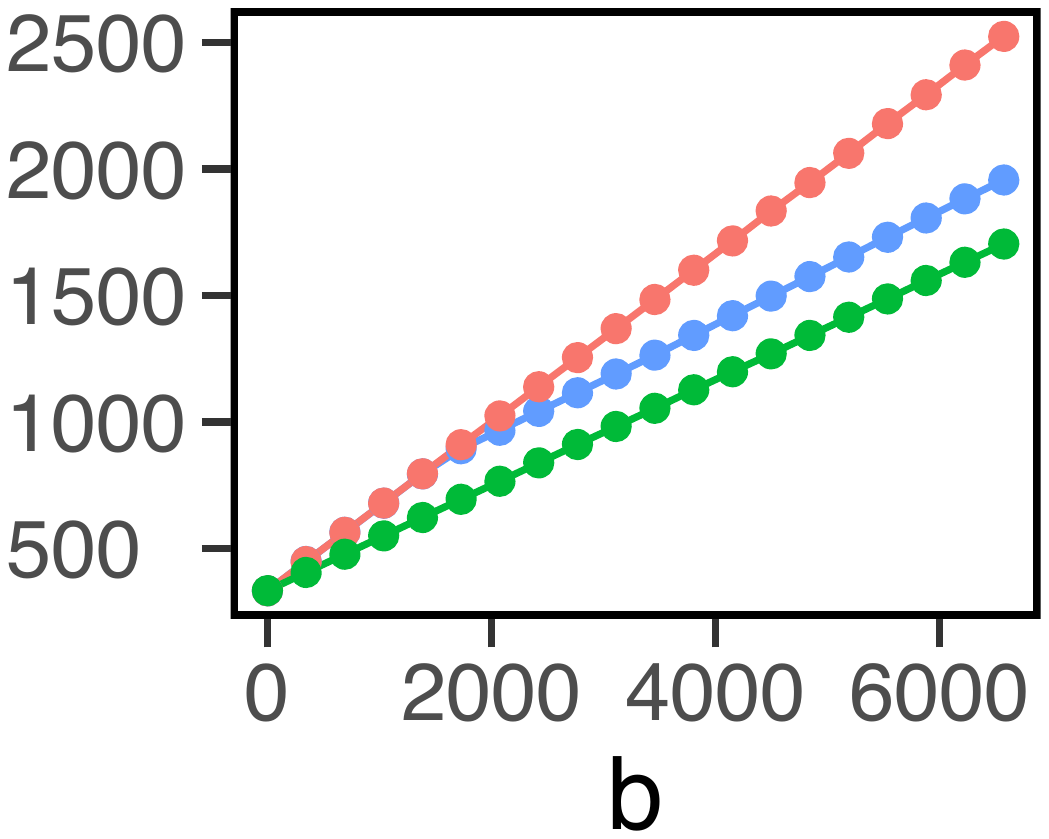}
		\vspace{-10pt}
        \caption{GrGenes (Hamming)}
		\label{fig:b_10}
	\end{subfigure}\hfill
    \begin{subfigure}[b]{0.24\textwidth}
		\centering
		\includegraphics[width=\textwidth]{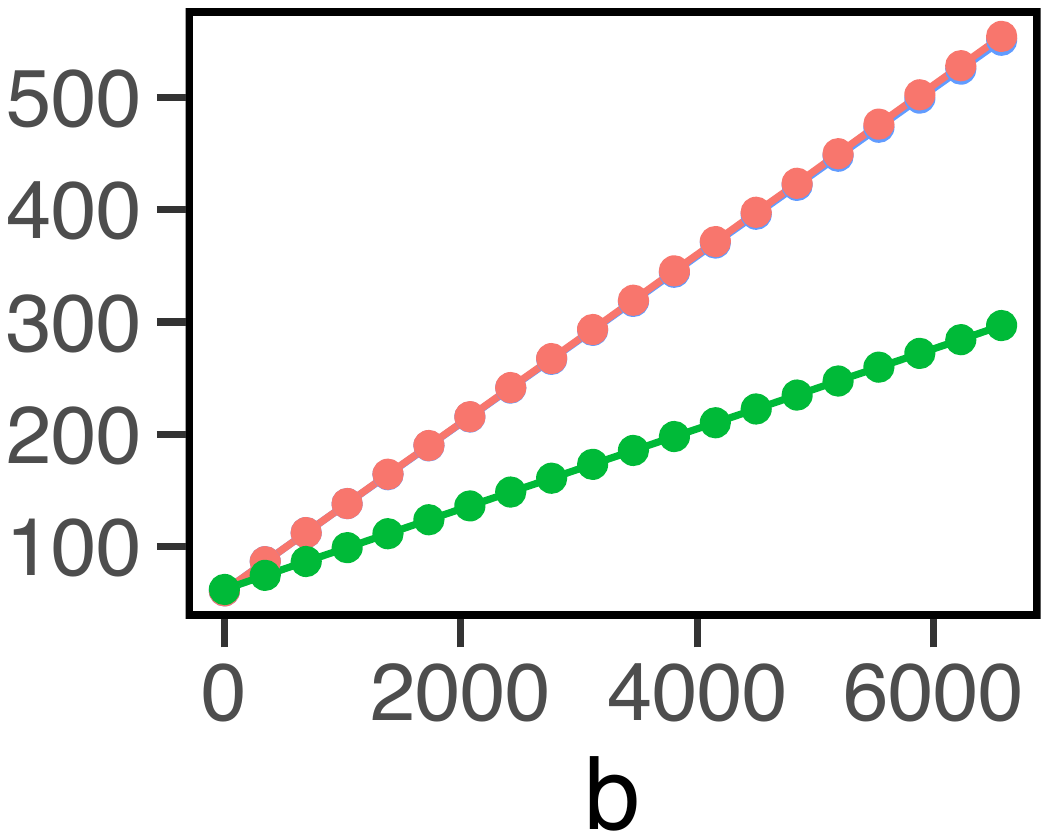}
		\vspace{-10pt}
        \caption{Fashion (Euclidean)}
		\label{fig:b_11}
	\end{subfigure}\hfill
	\begin{subfigure}[b]{0.24\textwidth}
		\centering
		\includegraphics[width=\textwidth]{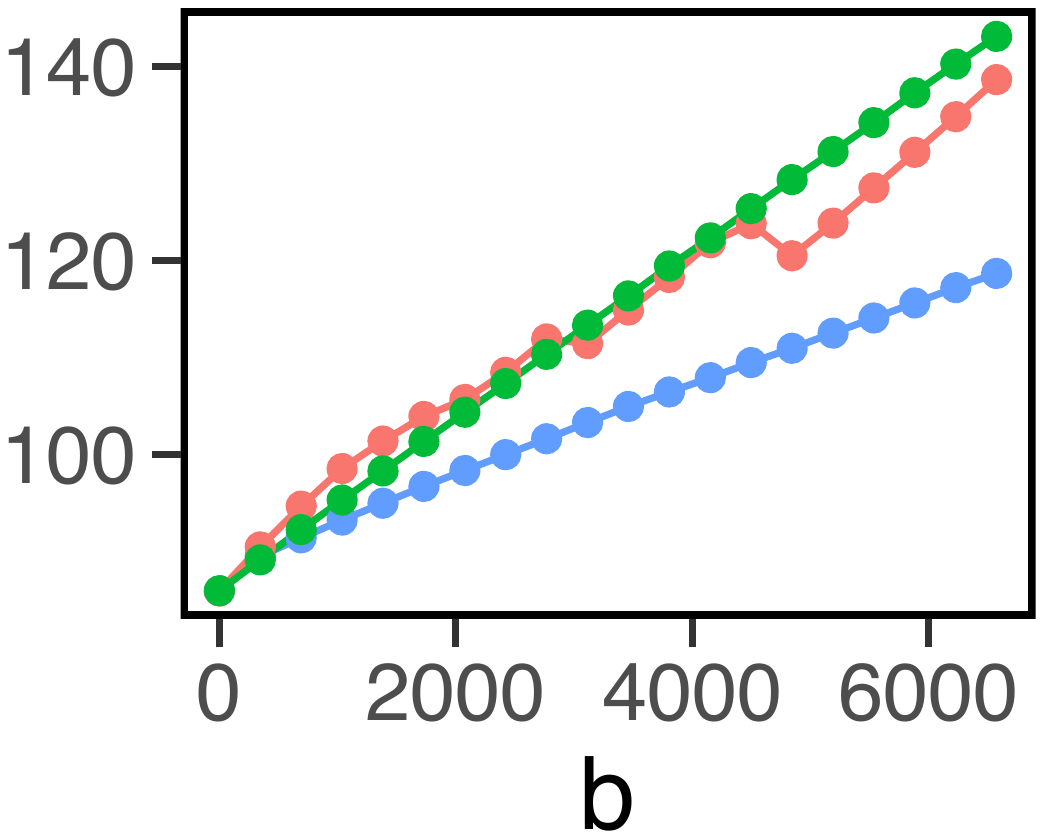}
		\vspace{-10pt}
        \caption{Names (Levenshtein)}
		\label{fig:b_12}
	\end{subfigure}
    
    $\begin{alignedat}{10}
                \textcolor[RGB]{239, 94, 89}{\bullet} \text{~Dynamic~}        & ~~~ &
                \textcolor[RGB]{41, 175, 33}{\blacktriangle} \text{~Fixed~}  & ~~~ &
                \textcolor[RGB]{81, 137, 255}{\blacksquare} \text{~Greedy~}  
            \end{alignedat}$ \\
    
	\caption{We display the performance of each variant of \newalg{} as budget increases. Each point corresponds to running one method with a fixed budget $b$. 
    % The third row of plots provides a closer look at the $\varepsilon_\alpha$ values for smaller values of $b$, to better illustrate the improvements made after only a small increase in runtime.
    }
	\label{fig:bexps}
\end{figure}

Recall that the MFC \textit{Cost Ratio} is defined by
\begin{equation*}
    \frac{w_\mathcal{X}(\hat{M}) + w_\mathcal{X}(E_t)}{w_\mathcal{X}(M^*) + w_\mathcal{X}(E_t)},
\end{equation*}
where $\hat{M}$ is the set of new edges added by \newalg{}, $M^*$ is the set of new edges added by an optimal solution to MFC, and $E_t$ is the initial forest. As another interesting point of comparison, Figure~\ref{fig:completion} displays results for the \emph{Completion Ratio}, which only considers the weight of new edges, and does not incorporate the initial forest weight:
\begin{equation*}
       \frac{w_\mathcal{X}(\hat{M})}{w_\mathcal{X}(M^*)}.
\end{equation*}
We previously showed that it is not possible to design an algorithm with subquadratic complexity that comes with an a priori upper bound on this ratio in general metric spaces~\citep{veldt2025approximate}. However, it is a useful quality measure to consider empirically, since it focuses more directly on the new edges found by the algorithm (recall that the weight of the initial forest is constant for every feasible solution for MFC). Figure~\ref{fig:completion} shows that adding even a small number of extra representatives improves the \emph{Completion Ratio} even more significantly than the improvement to the MFC \emph{Cost Ratio}. This indicates that \newalg{} is generally able to find much smaller edges to connect components in the initial forest. This is a promising sign for downstream applications where it is especially important to find small weight edges to connect components. For example, a key motivating application is to cluster a dataset $\mathcal{X}$ by mining a dendrogram associated with a spanning tree for $G_\mathcal{X}$. In this context, the initial forest provides a crude initial clustering of the data that can be refined and improved if one is able to truly find small weight edges between these initial clusters. The comparison between $w_\mathcal{X}(\hat{M})$ and $w_\mathcal{X}(M^*)$ is then especially important, even if the weight of the initial forest $w_\mathcal{X}(E_t)$ is large.

% \begin{figure}[t]
% \begin{subfigure}[b]{0.24\textwidth}
% 		\centering
% 		\includegraphics[width=\textwidth]{figures/newplots/jaccard_cooking_0_alpha_x_lim.pdf}
% 		\vspace{-10pt}
%         \caption{Cooking (Jaccard)}
% 		\label{fig:b_9}
% 	\end{subfigure}\hfill
%     \begin{subfigure}[b]{0.24\textwidth}
% 		\centering
% 		\includegraphics[width=\textwidth]{figures/newplots/hamming_gg_alpha_x_lim.pdf}
% 		\vspace{-10pt}
%         \caption{GrGenes (Hamming)}
% 		\label{fig:b_10}
% 	\end{subfigure}\hfill
%     \begin{subfigure}[b]{0.24\textwidth}
% 		\centering
% 		\includegraphics[width=\textwidth]{figures/newplots/hdf5_fashion_alpha_x_lim.pdf}
% 		\vspace{-10pt}
%         \caption{Fashion (Euclidean)}
% 		\label{fig:b_11}
% 	\end{subfigure}\hfill
% 	\begin{subfigure}[b]{0.24\textwidth}
% 		\centering
% 		\includegraphics[width=\textwidth]{figures/newplots/edit_distance_names_us_alpha_x_lim.pdf}
% 		\vspace{-10pt}
%         \caption{Names (Levenshtein)}
% 		\label{fig:b_12}
% 	\end{subfigure}
    
%     $\begin{alignedat}{10}
%                 \textcolor[RGB]{239, 94, 89}{\bullet} \text{~Dynamic~}        & ~~~ &
%                 \textcolor[RGB]{41, 175, 33}{\blacktriangle} \text{~Fixed~}  & ~~~ &
%                 \textcolor[RGB]{81, 137, 255}{\blacksquare} \text{~Greedy~}  
%             \end{alignedat}$ \\
    
% 	\caption{Close-up.}
% 	\label{fig:closeup}
% \end{figure}

\begin{figure}[t]
\begin{subfigure}[b]{0.24\textwidth}
		\centering
		\includegraphics[width=\textwidth]{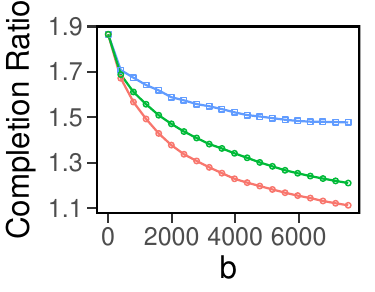}
		\vspace{-10pt}
        \caption{Cooking (Jaccard)}
		\label{fig:b_9}
	\end{subfigure}\hfill
    \begin{subfigure}[b]{0.24\textwidth}
		\centering
		\includegraphics[width=\textwidth]{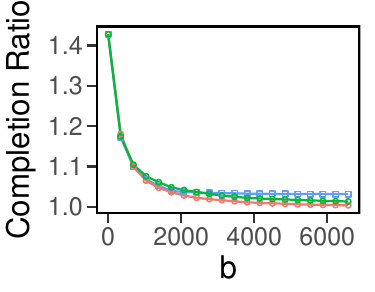}
		\vspace{-10pt}
        \caption{GrGenes (Hamming)}
		\label{fig:b_10}
	\end{subfigure}\hfill
    \begin{subfigure}[b]{0.24\textwidth}
		\centering
		\includegraphics[width=\textwidth]{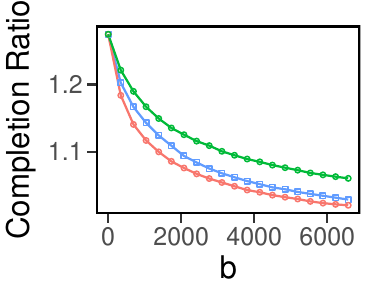}
		\vspace{-10pt}
        \caption{Fashion (Euclidean)}
		\label{fig:b_11}
	\end{subfigure}\hfill
	\begin{subfigure}[b]{0.24\textwidth}
		\centering
		\includegraphics[width=\textwidth]{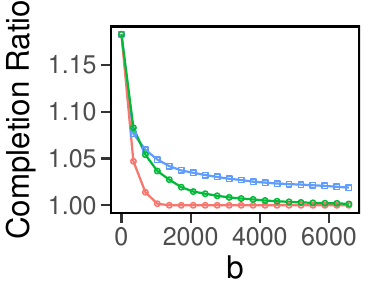}
		\vspace{-10pt}
        \caption{Names (Levenshtein)}
		\label{fig:b_12}
	\end{subfigure}
    \centering 

    $\begin{alignedat}{10}
                \textcolor[RGB]{239, 94, 89}{\bullet} \text{~Dynamic~}        & ~~~ &
                \textcolor[RGB]{41, 175, 33}{\blacktriangle} \text{~Fixed~}  & ~~~ &
                \textcolor[RGB]{81, 137, 255}{\blacksquare} \text{~Greedy~}  
            \end{alignedat}$ \\
    
	\caption{For each dataset, we display the \emph{completion ratio}: the weight of \emph{new} edges added by \newalg{}, divided by the weight of the new edges added by an optimal solution for MFC. Adding a small number of extra representatives leads to an even more dramatic improvement to the completion ratio than to the MFC \emph{Cost ratio}.}
	\label{fig:completion}
\end{figure}

Finally, in Table~\ref{tab:3b} we provide a closer look at the various performance metrics ($\varepsilon$, $\varepsilon_\alpha$, number of distance calls, \emph{Completion Ratio}, and runtime) for three choices of $b$, on all datasets.

\begin{sidewaystable}
    \caption{We give exact performance values for each variant of \newalg{} for three different values of $b$. Comp is the completion ratio, DC is millions of distance calls made by the entire pipeline, and RT is the runtime of the entire pipeline in seconds.}
    \label{tab:3b}
\begin{tabular}{l cc ccccc ccccc ccccc l}
\multirow{2}{*}{Dataset} & \multirow{2}{*}{Alg}
& \multicolumn{5}{c}{$b=0$}
& \multicolumn{5}{c}{$b=t$}
& \multicolumn{5}{c}{$b=2t$}
\\
\cmidrule(lr){3-7}
\cmidrule(lr){8-12}
\cmidrule(lr){13-17}
& 
& $\varepsilon$ & $\varepsilon_\alpha$ & Comp & DC & RT
& $\varepsilon$ & $\varepsilon_\alpha$ & Comp & DC & RT
& $\varepsilon$ & $\varepsilon_\alpha$ & Comp & DC & RT
\\
\toprule
\multirow{3}{*}{Cooking} & DP & 0.0022 & 0.0075 & 1.85 & 58.0 & 99.8 & 0.0017 & 0.0060 & 1.67 & 104.3 & 174.0 & 0.0014 & 0.0054 & 1.56 & 150.6 & 248.6 \\ 
 & Greedy & 0.0022 & 0.0075 & 1.85 & 58.0 & 99.6 & 0.0018 & 0.0068 & 1.71 & 102.9 & 179.3 & 0.0017 & 0.0068 & 1.68 & 133.9 & 236.6 \\ 
 & Fixed & 0.0022 & 0.0075 & 1.85 & 58.0 & 100.0 & 0.0018 & 0.0071 & 1.69 & 88.4 & 150.5 & 0.0016 & 0.0068 & 1.62 & 118.7 & 206.0 \\ 
\midrule
\multirow{3}{*}{Fashion} & DP & 0.0017 & 0.0134 & 1.29 & 45.5 & 56.3 & 0.0011 & 0.0086 & 1.19 & 75.1 & 82.4 & 0.0009 & 0.0071 & 1.15 & 104.8 & 108.7 \\ 
 & Greedy & 0.0017 & 0.0134 & 1.29 & 45.5 & 56.3 & 0.0012 & 0.0097 & 1.21 & 75.1 & 82.4 & 0.0010 & 0.0084 & 1.18 & 104.8 & 108.6 \\ 
 & Fixed & 0.0017 & 0.0134 & 1.29 & 45.5 & 56.4 & 0.0013 & 0.0108 & 1.23 & 64.8 & 69.0 & 0.0012 & 0.0094 & 1.20 & 84.1 & 82.1 \\ 
\midrule
\multirow{3}{*}{Names} & DP & 0.0061 & 0.0408 & 1.20 & 415.9 & 88.4 & 0.0015 & 0.0116 & 1.05 & 437.2 & 92.7 & 0.0004 & 0.0038 & 1.01 & 458.6 & 96.6 \\ 
 & Greedy & 0.0061 & 0.0408 & 1.20 & 415.9 & 88.4 & 0.0025 & 0.0209 & 1.08 & 432.9 & 91.7 & 0.0019 & 0.0209 & 1.06 & 443.8 & 93.5 \\ 
 & Fixed & 0.0061 & 0.0408 & 1.20 & 415.9 & 88.4 & 0.0027 & 0.0241 & 1.09 & 426.9 & 91.2 & 0.0016 & 0.0163 & 1.05 & 437.8 & 94.0 \\ 
\midrule
\multirow{3}{*}{GG} & DP & 0.0084 & 0.0455 & 1.44 & 61.9 & 270.1 & 0.0034 & 0.0257 & 1.18 & 91.0 & 388.2 & 0.0020 & 0.0196 & 1.10 & 120.2 & 506.0 \\ 
 & Greedy & 0.0084 & 0.0455 & 1.44 & 61.9 & 270.2 & 0.0034 & 0.0280 & 1.18 & 91.0 & 388.3 & 0.0019 & 0.0221 & 1.10 & 120.1 & 506.1 \\ 
 & Fixed & 0.0084 & 0.0455 & 1.44 & 61.9 & 270.5 & 0.0034 & 0.0318 & 1.18 & 80.7 & 346.1 & 0.0020 & 0.0257 & 1.10 & 99.5 & 421.3 \\ 
\bottomrule
\end{tabular}
\end{sidewaystable}